\newtheorem{theorem}{Theorem}
\newtheorem{lemma}{Lemma}
\newtheorem{definition}{Definition}
\numberwithin{equation}{section}
\newenvironment{proof}[1][Proof]{\noindent\textbf{#1.} }{\ \rule{0.5em}{0.5em}}
\renewcommand{\epsilon}{\varepsilon}
\def\M{\mathcal{M}}
\def\T{\mathcal{T}}
\def\D{\mathcal{D}}
\def\bE{\mathbb{E}}
\def\Pr{\mathrm{Pr}}
\let\@copyrightspace\relax
\begin{document}

\title{Quantum Advantage on Information Leakage for Equality}

\author{Juan Miguel Arrazola\thanks{Centre for Quantum Technologies, National University of Singapore, and Institute for Quantum Computing, University of Waterloo.
}
\and Dave Touchette \thanks{Institute for Quantum Computing and Department of Combinatorics and Optimization,
University of Waterloo, and the Perimeter Institute for Theoretical Physics.
}}

\maketitle

\vspace{2cm}

\begin{abstract}
We  prove a  lower bound on the information leakage of any classical protocol computing the equality function in the simultaneous message passing (SMP) model. Our bound is valid in the finite length regime and is strong enough to demonstrate a quantum advantage in terms of information leakage for practical quantum protocols. We prove our bound by obtaining an improved finite size version of the communication bound due to Babai and Kimmel, relating randomized communication to deterministic communication in the SMP model. We then relate information leakage to randomized communication through a series of reductions. We first provide alternative characterizations for information leakage, allowing us to link it to average length communication while allowing for shared randomness (pairwise, with the referee). A Markov inequality links this with bounded length communication, and a Newman type argument allows us to go from shared to private randomness. The only reduction in which we incur more than a logarithmic additive factor is in the Markov inequality; in particular, our compression method is essentially tight for the SMP model with average length communication.
\end{abstract}

\section{Introduction}

The simultaneous message passing model (SMP) of communication complexity (CC) can be understood as a 
simple model of a three node network with two players, Alice and Bob, and a referee, Charlie. 
Alice is given some input $x$, Bob some input $y$, and each sends a message to Charlie who should 
be able to compute with high probability some function $f(x,y)$ from the messages. 
Alice and Bob are not allowed to pre-share any resource, e.g.~randomness or entanglement.
See Ref.~\cite{KN97} for an introduction to classical communication complexity.
Here, we are interested in $f$ being the equality function, i.e. Charlie must determine whether $x = y$ or not. 

It is known that quantum protocols have 
an exponential advantage in terms of communication for computing equality in this model~\cite{BCWW01}. Communication is always an upper bound on the information leakage of a protocol, so  the best 
theoretical quantum protocol has information leakage logarithmic in the input size. 
In contrast, information leakage of any classical protocol computing equality is known to be at 
least quadratic in the input size~\cite{CSWY01}. Hence, a three node quantum 
network enables Alice and Bob, by each sending a single message to Charlie, to let him verify whether or not their inputs agree, while revealing exponentially less information to Charlie about these inputs than would be possible in any classical network. 

In the context of quantum communication, the task of computing equality in the SMP model is referred to as quantum fingerprinting. Recently, Arrazola and L\"utkenhaus~\cite{AL14} have proposed a practical quantum fingerprinting protocol which has prompted experimental efforts in this field~\cite{XAWW15,GX16}. However, the bounds stated above are asymptotic and do not account for any effect due to finite size inputs that would be relevant in a practical setting. 
Moreover, although finite-length lower bounds for the communication cost of classical protocols are known which can be beaten by quantum protocols, similar lower bounds on the information leakage are asymptotic in nature and the hidden constants are too large for quantum protocols to surpass the bounds in a practical setting.

In order to show that quantum protocols can achieve a smaller information leakage than any 
classical protocol for some finite length inputs, we must improve on previously known lower bounds and account for the finite size nature of practical protocols. In this work, we start by improving on the known communication bounds and get an improvement of more than one order of magnitude. We then leverage this communication bound to an information leakage lower bound by using a series of reductions and by providing alternative characterizations of information leakage 
in the SMP model. We show that, up to a logarithmically small additive term, the average cost of communication 
exactly agrees with the information leakage; this is the best statement we can hope for.
The main technical ingredient we use is a single message compression result due to Harsha, Jain, McAllester, and Radhakrishnan~\cite{HJMR10}.
Markov's inequality and a Newman type argument then allows us to link information leakage of any classical protocol to our improved lower bound on worst-case communication. This new bound can be used to show that, in 
realistic regimes, practical quantum protocols can achieve smaller information leakage than any 
classical protocol achieving the same task, hence showing the possibility of demonstrating a practical quantum advantage of three node quantum networks over their classical analogues.

\paragraph{Organization} In the next section, we state relevant definitions for communication complexity in the SMP model. We then 
define a notion of information leakage in this model, and provide significant evidence that it is the 
right one to consider.
In the following section, we state our technical lemmata and  combine 
them to prove our main result.
We then discuss the link between our results and the quantum fingerprinting protocol of Ref.~\cite{AL14} before concluding.
The proofs are relegated to the Appendix.

\section{Preliminaries}

We have the following definitions for the different simultaneous message passing (SMP) 
models of communication that we consider.
In all of these, $x \in X$ is Alice's input, $y \in Y$ is Bob's input, $r_A \in R_A$ is Alice's private 
randomness, $r_B \in R_B$ is Bob's private randomness, $r_C \in R_C$ is the referee's private 
randomness, $r_{AC} \in R_{AC}$ is the shared randomness between Alice and the referee, 
$r_{BC} \in R_{BC}$ is the shared randomness between Bob and the referee, $m_A \in M_A$ is 
Alice's message to the referee, $m_B \in M_B$ is Bob's message to the referee, and
$f : X \times Y \rightarrow Z$ is the function of $x$ and $y$ that Alice and Bob want the referee to compute.
Note that we abuse notation and overload the above notation for sets to also denote the corresponding random 
variables. We denote by $\D_{XY}$ the set of all joint probability distributions $\mu$ on $X \times Y$.
All logarithms are taken to base $2$, $e$ denotes the base of the natural logarithm, $\exp$ the exponential function in base $e$, and $\exp_2$ the exponential function in base $2$.

\subsection{Private Coin}

A protocol $\Pi$ in the private coin SMP model is defined by functions $\Pi_A: X \times R_A \rightarrow M_A$, 
$\Pi_B: Y \times R_B \rightarrow M_B$ and $\Pi_C : M_A \times M_B \times R_C \rightarrow Z$, and by distributions for the random strings $R_A$, $R_B$, $R_C$. 
We denote by $\Pi (x, y)$ the random variable on $Z$ corresponding to the output of the
 referee when Alice and Bob's inputs are $x$ and $y$, respectively, with the underlying distribution given by
the randomness used in $\Pi$, i.e.~$r_A, r_B$ and $ r_C$. 
The communication cost of protocol $\Pi$ is defined as 
\begin{align}
CC_{priv} (\Pi) = \lceil \log |M_A| \rceil + \lceil \log |M_B| \rceil.
\end{align}
The error of $\Pi$ for the function $f$ on input $(x, y)$ is defined as $P_e (\Pi,(x, y) ) = \mathrm{Pr}_\Pi[\Pi(x, y) \not= f(x, y)]$. The error of 
protocol $\Pi$ for the function $f$ is defined as $P_e (\Pi) = \max_{(x, y)} P_e (\Pi, (x, y))$. We denote the set of 
all protocols computing $f$ with error at most $\epsilon$ as $\T_{priv} (f, \epsilon)$. The communication complexity 
for computing $f$ with error $\epsilon$ is defined as 
\begin{align}
CC_{priv} (f, \epsilon) = \min_{\Pi \in \T_{priv} (f, \epsilon)} CC_{priv} (\Pi).
\end{align}

\subsection{Shared Randomness}

Similarly, a protocol in the shared randomness SMP model is 
defined by $\Pi_A: X \times R_A \times R_{AC} \rightarrow M_A$, 
$\Pi_B: Y \times R_B \times R_{BC} \rightarrow M_B$ and 
$\Pi_C : M_A \times M_B \times R_C \times R_{AC} \times R_{BC} \rightarrow Z$,
and by distributions for the random strings $R_A$, $R_B$, $R_C$, $R_{AC}$, $R_{BC}$.
Everything else is formally defined as in the private coin SMP 
model, with any $priv$ subscript replaced by a $sh$ subscript, 
and any averaging also accounting for $r_{AC}$ and $r_{BC}$. 
Note that we exclude the possibility of shared randomness between 
Alice and Bob, as in this case equality can be computed trivially in the SMP model.

\subsection{Average Length}

Protocols in the average length SMP model are defined as 
those in the shared randomness SMP model, but we also 
associate bit length functions $\ell_A : M_A \rightarrow \mathbb{N}$, 
$\ell_B : M_B \rightarrow \mathbb{N}$ on the message sets 
(these must satisfy some structural properties that correspond 
to a variable length encoding's ability to physically encode and decode information in the 
corresponding amount of bits, and in particular satisfy $\mathbb{E} (\ell_A (M_A)) \geq H (M_A)$, 
with $H(M_A)$ the Shannon entropy of random variable $M_A$,
and similarly for Bob's message. We do not further discuss these details since they are mostly irrelevant 
to the discussion here). 
A technical subtlety in the average length model is that we allow for message sets with messages of potentially unbounded length, and similarly we allow for potentially unbounded shared randomness, though we restrict our attention to finite expected message length, and correspondingly finite expected use of the shared randomness.
The average communication cost of  
protocol $\Pi$ on input $(x, y)$ is then defined  
as 
\begin{align}
CC_{av} (\Pi, (x, y)) = \mathbb{E}_{\Pi(x, y)} [\ell_A (M_A (x)) + \ell_B (M_B(y))], 
\end{align}
the average communication cost of protocol $\Pi$ as 
\begin{align}
CC_{av} (\Pi) = \max_{(x, y)} CC_{av} (\Pi, (x, y)), 
\end{align}
and the average communication complexity for computing $f$ with error $\epsilon$ as 
\begin{align}
CC_{av} (f, \epsilon) = \min_{\Pi \in \T_{av} (f, \epsilon)} CC_{av} (\Pi), 
\end{align}
with $\T_{av} $ the set of all protocols in the average length SMP 
model computing $f$ with $\epsilon$ error.

\subsection{Link Between Complexities}

We have the following chain of inequalities for the complexity of computing $f$ with $\epsilon$ error (we 
implicitly use the uniform length functions $\ell_A = \lceil \log |M_A| \rceil$, 
$\ell_B = \lceil \log |M_B| \rceil$ corresponding to equal length encodings 
for all messages of Alice and Bob, respectively, to link $CC_{sh}$ with $CC_{av}$):
\begin{align}
CC_{priv} (f, \epsilon) \geq CC_{sh} (f, \epsilon) \geq CC_{av} (f, \epsilon).
\end{align}
In Section~\ref{sec:main}, we prove inequalities in the reverse direction.

\section{Information Leakage in the SMP Model}
\label{sec:ic_def}

\subsection{Information Leakage and Compression}

In a communication complexity setting, the notion of information leakage (or information complexity) 
aims to quantify how much information the parties must reveal about their inputs to compute a given 
function. It is known that for general two-party interactive protocols, there can be an exponential 
gap between information and communication complexity of some functions~\cite{GKR14, GKR15}. However, for protocols 
with a bounded number of rounds, the two notions are known to be almost equivalent, up to some 
dependence on the number of rounds and the allowed increase in error~\cite{BR14, JPY12, BRWY13}.

Often, this equivalence is shown by first arguing, through a compression argument, about 
the distributional setting for which players want to achieve good average error for 
a fixed distribution on the inputs, and then using Yao's minimax theorem, which 
relates the distributional setting to the standard worst-case setting.

However, the analogue of Yao's minimax theorem does not hold in the SMP model, because 
we do not allow for shared randomness between Alice and Bob (otherwise, as pointed out earlier, computing 
equality becomes trivial). Nevertheless, Chabrakarti, Shi, Wirth and Yao~\cite{CSWY01} were able to 
show that for a large class of functions (including the equality function), information and 
communication complexities are related in the worst-case setting. Jain and Klauck~\cite{JK09} later
extended this result to all functions and relations. 
These results are not strong enough for our purposes: they are asymptotic in nature 
and the hidden constants are too large to allow us to show the separation that we seek 
relative to quantum information leakage. This is in part due to a limitation of their compression techniques; 
to obtain improved bounds, we use an alternate compression result due to Harsha, 
Jain, McAllister and Radhakrishnan~\cite{HJMR10}. Note that since Yao's minimax theorem does not 
hold for the SMP model, we must use the worst-case input version of the result of Ref.~\cite{HJMR10}. We state it here after the following definition. Here and 
throughout, $I(;)$ denotes the mutual information.

\begin{definition}
Let $\M_X : X \rightarrow \D_M$ be a noisy channel, i.e.~for each input $x$, we associate an output 
random variable $M_x$ over the set $M$.
For any $X \in \D_X$, denote by $XM$ the joint variable where the conditional 
probability distribution $M | X =x$ is distributed as $M_x$ for each $x$, and 
let $C_{\M} = \max_{X \in \D_X} I (X:M)$. An exact simulator for the channel $\M_X$ is a 
one-message protocol $\Pi$ in the average length model such that for any 
input $x \in X$ of Alice, the referee's output must be distributed exactly 
as $M_x$; see Figure~\ref{fig:channelsim}. Let $\T_{av} (\M_X)$ denote the set of all exact simulators for $\M_X$.
We define the communication complexity of $\M_X$ as
\begin{align*}
CC_{av} (\M_X) = \min_{\Pi \in \T_{av} (\M_X)} \max_{x \in X} CC_{av} (\Pi, x).
\end{align*} 
\end{definition}
\begin{lemma}[\cite{HJMR10}]
\label{lem:hjmrcomp}
For any $\M_X : X \rightarrow \D_M$, it holds that 
\begin{align}
CC_{av} (\M_X) \leq C_\M + g_1 (C_\M), 
\end{align}
with $g_1 (x) = 2 \log (x + 1)  + 10$.
\end{lemma}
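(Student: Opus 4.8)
The plan is to prove the channel simulation bound $CC_{av}(\M_X) \leq C_\M + 2\log(C_\M+1)+10$ by constructing an explicit one-message rejection-sampling protocol that uses the shared randomness between Alice and the referee as a common source of i.i.d. samples. The key observation is that this is the worst-case-input version of the single-message compression result of Harsha, Jain, McAllester and Radhakrishnan, so I would follow their correlated-sampling (Poissonized rejection sampling) scheme and track the message length carefully to extract explicit constants.

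\textbf{The protocol.} Fix a channel $\M_X$ with $C_\M = \max_{X}I(X:M)$. For each input $x$, Alice must cause the referee to output a sample distributed exactly as $M_x$. The idea is to let Alice and the referee share, via $R_{AC}$, an infinite sequence of i.i.d. samples $m_1, m_2, \dots$ drawn from a fixed reference distribution $Q$ on $M$ together with associated i.i.d. uniform $[0,1]$ labels. Here a natural choice is $Q = P_{M}$ induced by the worst-case input distribution, but since we want a bound holding for \emph{every} $x$ simultaneously, $Q$ is taken to be an appropriate mixture so that $P_{M_x}/Q$ stays controlled for all $x$. Alice, knowing $x$, runs the Harsha--Jain--McAllester--Radhakrishnan rejection-sampling procedure against the sequence: she accepts the first index $k$ whose sample survives a test comparing its label to the likelihood ratio $P_{M_x}(m_k)/Q(m_k)$ (appropriately normalized across rounds). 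She then sends the \emph{index} $k$ to the referee, who reads off the shared sample $m_k$ and outputs it. Correctness is immediate from the correctness of rejection sampling: the accepted $m_k$ is distributed exactly as $M_x$, so this is a valid exact simulator.

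\textbf{Bounding the message length.} The whole content of the lemma is in bounding $\bE[\ell_A(K)]$, the expected number of bits needed to encode the accepted index $K$. The HJMR analysis shows that the index is concentrated: one expects $K$ to be of order $2^{D(P_{M_x}\|Q)}$ rounds before acceptance, so $\log K$ is roughly $D(P_{M_x}\|Q)$. Choosing $Q$ optimally (the worst-case-input analogue forces a max over $x$) makes $\max_x D(P_{M_x}\|Q)$ essentially equal to $C_\M = \max_X I(X:M)$. I would encode $K$ with a prefix-free code whose length is about $\log K + 2\log\log K + O(1)$ bits (an Elias-type integer code), and then apply Jensen's inequality and a careful tail bound on the rejection-sampling index to turn $\bE[\log K]$ into $C_\M$ plus the lower-order correction. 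The additive overhead $2\log(C_\M+1)+10 = g_1(C_\M)$ is exactly what falls out of the doubly-logarithmic integer-encoding term $2\log\log K$ together with the constants from the rejection-sampling concentration and the ceiling in the length function.

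\textbf{Main obstacle.} The delicate part is the interchange between the worst case over $x$ and the single reference distribution $Q$ shared in advance. In the distributional setting one simply takes $Q = P_M$ for the given input distribution and $\bE_X D(P_{M_x}\|P_M) = I(X:M)$ falls out directly; here, because no input distribution is fixed and the bound must hold for every $x$, I must verify that there is one $Q$ making $\max_x D(P_{M_x}\|Q)$ close to $C_\M$, which is a minimax statement tying the channel capacity $C_\M$ to the worst-case relative entropy. This is precisely where the worst-case-input version of the result is needed, and controlling this step while keeping the additive constants as small as $g_1$ is the technical crux; the integer-coding and Jensen steps are then routine.
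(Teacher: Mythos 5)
This lemma is imported by the paper directly from Ref.~\cite{HJMR10} and is not reproved anywhere in the text, so there is no in-paper proof to compare against; your sketch is a faithful reconstruction of the HJMR argument itself --- rejection sampling against shared i.i.d.\ samples from a reference distribution $Q$, an Elias-type prefix-free encoding of the accepted index contributing the $2\log(\,\cdot\,+1)$ term, and the redundancy--capacity minimax identity $\min_Q \max_x D(P_{M_x}\|Q) = C_\M$ to handle the worst-case-over-$x$ requirement, which you correctly flag as the crux of the prior-free version. The only point left unverified is that the additive constant works out to $10$, which is precisely the level of detail one would have to check against the source rather than something your outline gets wrong.
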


\begin{figure}[t!]
\begin{center}
\includegraphics[width=\columnwidth]{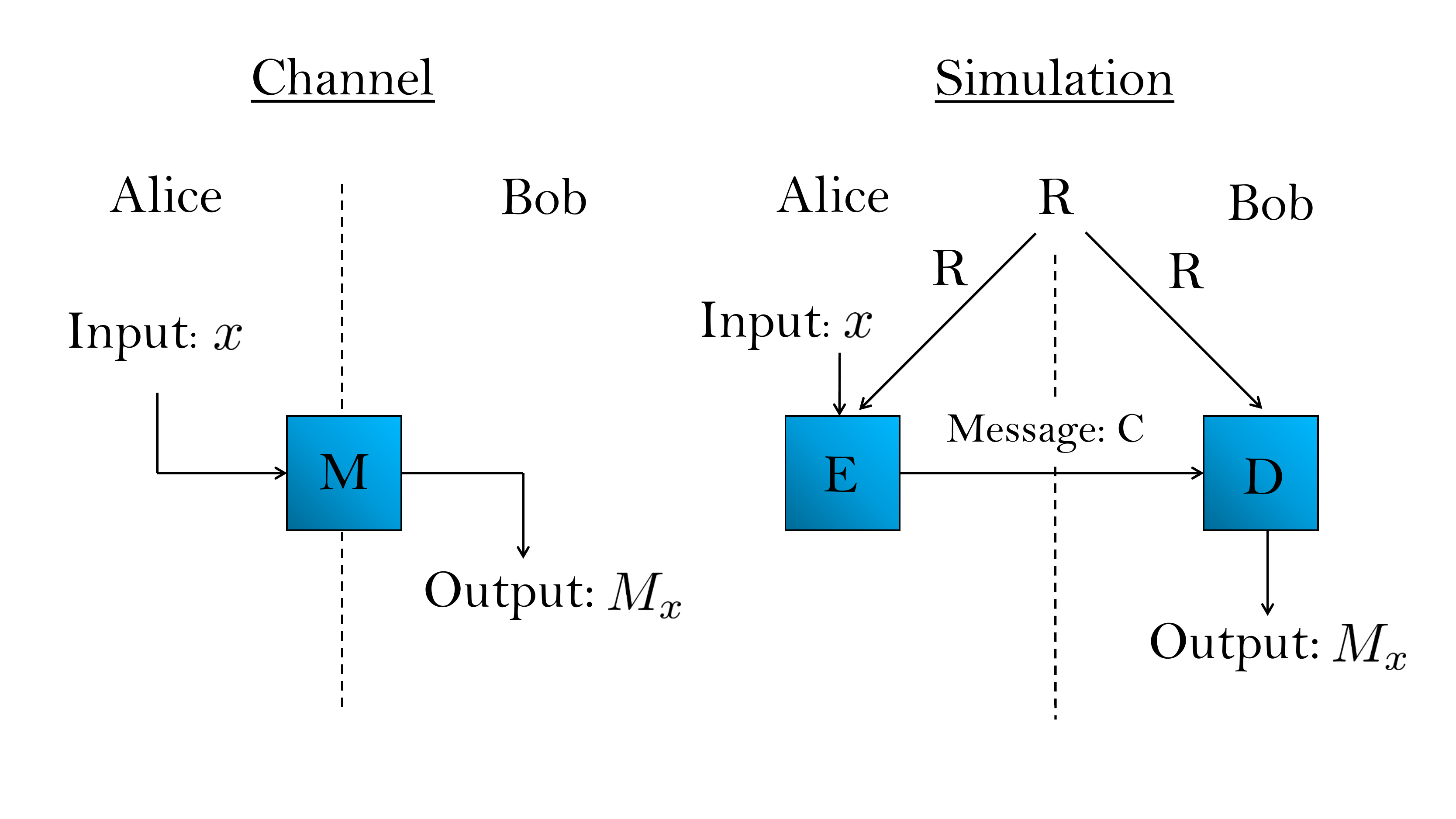}
\caption{Channel Simulation: the figure on the left hand side depicts a noisy channel with output distribution $M_x$ on input $x$, and the one on the right hand side, the simulation of such a channel. The goal of the simulation is to minimize the average length of the message $C$ while generating the same conditional output distribution $M_x$ on input $x$, while also being given free access to shared randomness $R$.}
\label{fig:channelsim}
\end{center}
\end{figure}


\subsection{Definition of Information Leakage}

We want our definition of information leakage in the SMP model to capture how much information the 
referee has at the end of the protocol about the joint input of Alice and Bob. In order 
to have a meaningful notion of information, we must first pass through a distributional 
definition; there is no information,  in an information-theoretic sense, to learn about a fixed input. 
The prior-free information leakage of a protocol is then defined by maximizing over all 
input distributions, in analogy to average communication cost. 

We define information leakage (and then information complexity) for protocols in 
the shared randomness SMP model.
However, the definition also applies for protocols in the average length SMP model, and, by 
taking $R_{AC}$ and $R_{BC}$ to be trivial registers, in the private randomness SMP model.
Furthermore, since $R_{AC}$ and $R_{BC}$ are independent of the inputs, note that Alice
and Bob can generate them using private randomness and transmit them to the 
referee without changing the information cost, though by increasing the communication.

\begin{definition}
 The information leakage of protocol $\Pi$ on input distribution $\mu$ is defined as \begin{align}
IL (\Pi, \mu) = I (XY; M_A M_B R_C R_{AC} R_{BC}),
\end{align}
and the information leakage of protocol $\Pi$ is defined as
\begin{align}
IL (\Pi) = \max_{\mu \in \D_{XY}} IL (\Pi, \mu).
\end{align}
The information leakage for computing $f$ with error $\epsilon$ is defined as
\begin{align}
IL (f, \epsilon) = \min_{\Pi \in \T_{sh} (f, \epsilon)} IL (\Pi).
\end{align}
\end{definition}

Note that $M_A M_B R_C R_{AC} R_{BC}$ is the set of registers available to the 
referee at the end of the protocol. However, since $R_{AC} R_{BC}$ do not contain 
information about $XY$, it follows from the chain rule for mutual information that 
\begin{align}
IL (\Pi, \mu) = I (XY;  M_A M_B R_C | R_{AC} R_{BC}).
\end{align}
 Moreover, $I(XY ; R_C | M_A M_B R_{AC} R_{BC}) = 0$ also, so we can further simplify as
\begin{align}
IL (\Pi, \mu) = I (XY; M_A M_B | R_{AC} R_{BC}).
\end{align}
Again, note that shared randomness can be replaced by private randomness without changing the information leakage by 
having Alice and Bob generate the shared randomness privately and then transmitting it at no information 
cost. Allowing for shared randomness however allows us to link information and average 
communication through compression arguments.

\subsection{Alternate Characterizations of Information Leakage}

Another notion of information is also natural to consider in the SMP model, and satisfies many interesting properties. Following Chakrabarti, Shi, Wirth, and Yao~\cite{CSWY01}, we call it information complexity.
\begin{definition}
 The information complexity of protocol $\Pi$ on input distribution $\mu$ is defined as \begin{align}
IC (\Pi, \mu) = I (X; M_A | R_{AC}) + I (Y; M_B | R_{BC}),
\end{align}
and the information complexity of protocol $\Pi$ is defined as
\begin{align}
IC (\Pi) = \max_{\mu \in \D_{XY}} IC (\Pi, \mu).
\end{align}
The information complexity for computing $f$ with error $\epsilon$ is defined as
\begin{align}
IC (f, \epsilon) = \min_{\Pi \in \T_{sh} (f, \epsilon)} IC (\Pi).
\end{align}
\end{definition}

Note that for any $\Pi$, $\mu$, $f$ and $\epsilon$, it holds that
\begin{align}
IC (\Pi, \mu)& \leq \mathbb{E}_\mu [CC_{av} (\Pi, (x, y))], \\
IC (\Pi) & \leq CC_{av} (\Pi), \\
IC (f, \epsilon) &\leq CC_{av} (f, \epsilon),
\end{align}
in which we used that $I(X ; M_A | R_{AC}) \leq H (M_A) \leq \bE_{(x, y) \sim \mu, \Pi(x, y)} [\ell_A (M_A (x))]$, and similarly for Bob's message.
The information complexity satisfies an additivity property, is continuous in the error parameter for $\epsilon > 0$, and, up to a small additive logarithmic term, is equivalent to the average communication complexity, a result that follows from two applications of the compression result in Lemma~\ref{lem:hjmrcomp}. 
\begin{lemma}
\label{lem:icvsccav}
For any $f$ and any $\epsilon \in [0, \frac{1}{2})$, it holds that 
\begin{align}
CC_{av} (f, \epsilon) &\leq IC (f, \epsilon) + 2 g_1 (IC(f, \epsilon)),
\end{align}
with $g_1 (x) = 2 \log (x + 1) + 10$.
\end{lemma}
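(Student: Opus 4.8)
The plan is to start from a protocol $\Pi \in \T_{sh}(f,\epsilon)$ whose information complexity $IC(\Pi)$ is arbitrarily close to $IC(f,\epsilon)$, and to build from it a protocol $\Pi'$ in the average length model with the same error but average communication cost only a logarithmic additive term above $IC(\Pi)$. The protocol $\Pi'$ will be obtained by replacing Alice's message and Bob's message, each by a compressed message produced through one application of the channel simulation result of Lemma~\ref{lem:hjmrcomp}; this is the source of the ``two applications''. The referee will recover exact samples of the original messages and then run $\Pi_C$ unchanged, so that the output distribution, and hence the error on every input $(x,y)$, is preserved.

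Before compressing, I would record a decoupling identity that reconciles the single maximization over $\mu \in \D_{XY}$ in the definition of $IC$ with the two separate channels I want to simulate. Since $M_A$ is produced by $\Pi_A$ from $x$, $R_A$ and $R_{AC}$ alone, and $R_{AC}$ is independent of the inputs, the term $I(X;M_A|R_{AC})$ depends only on the marginal $\mu_X$, and likewise $I(Y;M_B|R_{BC})$ depends only on $\mu_Y$. Evaluating the sum on a product distribution whose marginals separately maximize the two terms then shows
\[
IC(\Pi) = \max_{\mu_X} I(X;M_A|R_{AC}) + \max_{\mu_Y} I(Y;M_B|R_{BC}),
\]
and I set $C_{\M_A} := \max_{\mu_X} I(X;M_A|R_{AC})$ and $C_{\M_B} := \max_{\mu_Y} I(Y;M_B|R_{BC})$, so that $IC(\Pi) = C_{\M_A} + C_{\M_B}$.

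The construction and its correctness is the step requiring the most care, and I expect it to be the main obstacle. The naive attempt, keeping $R_{AC}$ shared in $\Pi'$ and simulating, for each value $r_{AC}$, the conditional channel $x \mapsto M_A$, is too lossy, because Lemma~\ref{lem:hjmrcomp} would then charge roughly $\mathbb{E}_{r_{AC}} \max_{\mu_X} I(X;M_A|R_{AC}=r_{AC})$, which can far exceed the conditional quantity $C_{\M_A}$. To avoid this, I would instead have Alice generate $R_{AC}$ from private randomness and simulate the single \emph{joint} channel $\hat\M_A : x \mapsto (M_A, R_{AC})$, and symmetrically $\hat\M_B : y \mapsto (M_B, R_{BC})$ for Bob. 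Because $R_{AC}$ is independent of $X$ under every input distribution, $I(X; M_A R_{AC}) = I(X; M_A | R_{AC})$, so the capacity of $\hat\M_A$ is exactly $C_{\M_A}$ (and similarly that of $\hat\M_B$ is $C_{\M_B}$), despite the possibly large output alphabet. The referee recovers exact samples of $(M_A, R_{AC})$ and $(M_B, R_{BC})$; since Alice's and Bob's randomness are independent and the recovered pairs carry the correct joint conditional distributions, drawing $R_C$ privately and outputting $\Pi_C(M_A, M_B, R_C, R_{AC}, R_{BC})$ reproduces the distribution of $\Pi(x,y)$ exactly, whence $\Pi' \in \T_{av}(f,\epsilon)$.

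It then remains to bound the cost. Applying Lemma~\ref{lem:hjmrcomp} to $\hat\M_A$ and to $\hat\M_B$ gives, for every $x$ and $y$, simulation costs at most $C_{\M_A} + g_1(C_{\M_A})$ and $C_{\M_B} + g_1(C_{\M_B})$ respectively, so that $CC_{av}(\Pi') \leq C_{\M_A} + C_{\M_B} + g_1(C_{\M_A}) + g_1(C_{\M_B})$. Using $C_{\M_A} + C_{\M_B} = IC(\Pi)$ together with $C_{\M_A}, C_{\M_B} \leq IC(\Pi)$ and the monotonicity of $g_1$ on $[0,\infty)$, the two correction terms are each at most $g_1(IC(\Pi))$, yielding $CC_{av}(\Pi') \leq IC(\Pi) + 2 g_1(IC(\Pi))$. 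Letting $IC(\Pi) \to IC(f,\epsilon)$ and invoking monotonicity of $g_1$ once more gives the claimed bound; the restriction $\epsilon \in [0,\frac{1}{2})$ enters only to ensure that $\T_{sh}(f,\epsilon)$ is nonempty and that $IC(f,\epsilon)$ is finite.
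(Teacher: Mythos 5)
Your proposal is correct and follows essentially the same route as the paper: both simulate the joint channel $x \mapsto (M_A, R_{AC})$ (resp.\ $y \mapsto (M_B, R_{BC})$) via Lemma~\ref{lem:hjmrcomp}, use the independence of $R_{AC}$ from $X$ to identify the channel capacity with $\max_{\mu_X} I(X;M_A|R_{AC})$, and let the referee reconstruct exact samples before running $\Pi_C$ unchanged. Your explicit decoupling identity showing $IC(\Pi)=C_{\M_A}+C_{\M_B}$ is stated only implicitly in the paper, so if anything your write-up is slightly more careful on that point.
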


We provide a proof in Appendix~\ref{app:comp}.
Note that $\log IC (f, \epsilon) \leq \log CC_{av}(f, \epsilon)$. 

These properties of information complexity imply that it is equal to the amortized communication complexity ($ACC$), i.e.~the optimal asymptotic average length communication complexity per copy for solving many copies of the same function in parallel.
\begin{theorem}
For any $f$ and $\epsilon \in (0, \frac{1}{2})$, it holds that 
\begin{align} 
IC (f ,\epsilon) = ACC (f, \epsilon).
\end{align}
\end{theorem}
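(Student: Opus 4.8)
The plan is to prove the two inequalities $IC(f,\epsilon) \leq ACC(f,\epsilon)$ and $ACC(f,\epsilon) \leq IC(f,\epsilon)$ separately, establishing that information complexity is exactly the amortized average-length communication complexity. The key enabling facts are the additivity of $IC$ and the near-equivalence $CC_{av}(f,\epsilon) \leq IC(f,\epsilon) + 2g_1(IC(f,\epsilon))$ from Lemma~\ref{lem:icvsccav}, together with the easy bound $IC(f,\epsilon) \leq CC_{av}(f,\epsilon)$ noted above. Throughout, let $f^{(n)}$ denote the $n$-fold copy of $f$ and recall that $ACC(f,\epsilon) = \lim_{n \to \infty} \frac{1}{n} CC_{av}(f^{(n)}, \epsilon)$.

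First I would establish the direction $IC(f,\epsilon) \leq ACC(f,\epsilon)$. Using the additivity of information complexity across independent copies, one gets $IC(f^{(n)},\epsilon) = n \cdot IC(f,\epsilon)$. Combining this with the bound $IC(g,\epsilon) \leq CC_{av}(g,\epsilon)$ applied to $g = f^{(n)}$ yields
\begin{align}
n \cdot IC(f,\epsilon) = IC(f^{(n)},\epsilon) \leq CC_{av}(f^{(n)},\epsilon).
\end{align}
Dividing by $n$ and taking the limit as $n \to \infty$ gives $IC(f,\epsilon) \leq ACC(f,\epsilon)$ immediately.

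For the reverse direction, I would apply Lemma~\ref{lem:icvsccav} to $f^{(n)}$, obtaining $CC_{av}(f^{(n)},\epsilon) \leq IC(f^{(n)},\epsilon) + 2g_1(IC(f^{(n)},\epsilon)) = n\cdot IC(f,\epsilon) + 2g_1(n\cdot IC(f,\epsilon))$, again by additivity. Since $g_1(x) = 2\log(x+1) + 10$ grows only logarithmically, dividing by $n$ gives
\begin{align}
\frac{1}{n} CC_{av}(f^{(n)},\epsilon) \leq IC(f,\epsilon) + \frac{2g_1(n\cdot IC(f,\epsilon))}{n},
\end{align}
and the second term vanishes as $n \to \infty$. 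Taking the limit yields $ACC(f,\epsilon) \leq IC(f,\epsilon)$, completing the argument. The main technical obstacle—and the point requiring care—is justifying the additivity of information complexity, namely $IC(f^{(n)},\epsilon) = n\cdot IC(f,\epsilon)$: the inequality $IC(f^{(n)},\epsilon) \leq n\cdot IC(f,\epsilon)$ follows by running $n$ independent copies of an optimal single-copy protocol, but the superadditive direction $IC(f^{(n)},\epsilon) \geq n\cdot IC(f,\epsilon)$ requires showing that the optimal product-input-distribution performance on $n$ copies cannot beat $n$ independent runs, which rests on the worst-case (prior-free) definition via maximization over $\mu$ together with the subadditivity of mutual information under the natural choice of product distributions. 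A subtle point is that the error parameter $\epsilon$ must be preserved across copies, so one must verify that the per-copy error does not degrade when decoding $n$ copies—here the product structure of the SMP messages and the independence of the copies ensure the error remains bounded by $\epsilon$ on each coordinate, which is why the continuity of $IC$ in $\epsilon$ for $\epsilon > 0$ is invoked to handle any technicalities at the boundary.
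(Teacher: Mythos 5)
Your argument is exactly the one the paper intends: the paper gives no explicit proof of this theorem, asserting only that it ``follows from'' the additivity of $IC$, the bound $IC(f,\epsilon)\leq CC_{av}(f,\epsilon)$, and Lemma~\ref{lem:icvsccav}, and your two-directional limit argument is precisely how those ingredients combine. You correctly identify that the only real content is the superadditivity $IC(f^{(n)},\epsilon)\geq n\cdot IC(f,\epsilon)$, which you (like the paper) state rather than prove in full; your sketch of why it holds for product distributions in the SMP model is the right idea.
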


Information leakage and information complexity are closely related:
\begin{lemma}
\label{lem:icvsildist}
For any $\Pi$ and $\mu$, 
\begin{align}
IL(\Pi, \mu) &\leq IC (\Pi, \mu) \\
& = IL(\Pi, \mu) + I(M_A ; M_B| R_{AC} R_{BC})\\
	& \leq 2 IL (\Pi, \mu).
\end{align}
\end{lemma}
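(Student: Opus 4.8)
The plan is to establish the three relations in order, noting first that the inequality $IL(\Pi,\mu) \le IC(\Pi,\mu)$ is an immediate consequence of the middle equality together with the nonnegativity of mutual information, so the real content lies in the central identity and the final upper bound. Throughout I would exploit the conditional-independence structure forced by the SMP model: since $M_A = \Pi_A(X, R_A, R_{AC})$ with $R_A$ private and independent of everything else, and symmetrically $M_B = \Pi_B(Y, R_B, R_{BC})$, the joint distribution satisfies (i) $I(M_A; Y M_B \mid X R_{AC} R_{BC}) = 0$, (ii) $I(M_B; X M_A \mid Y R_{AC} R_{BC}) = 0$, and (iii) $I(M_A; M_B \mid X Y R_{AC} R_{BC}) = 0$. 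I would also record that $R_{BC}$ is independent of $(X, M_A, R_{AC})$ and that $R_{AC}$ is independent of $(Y, M_B, R_{BC})$, which lets me freely insert the ``other'' shared string into the conditioning without changing the relevant mutual informations.

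For the central equality, write $R := R_{AC} R_{BC}$ and first bring both terms of $IC(\Pi,\mu)$ to a common conditioning: independence of $R_{BC}$ from $(X, M_A, R_{AC})$ gives $I(X; M_A \mid R_{AC}) = I(X; M_A \mid R)$, and fact (i) upgrades this to $I(X; M_A \mid R) = I(XY; M_A \mid R)$; symmetrically $I(Y; M_B \mid R_{BC}) = I(XY; M_B \mid R)$. Thus $IC(\Pi,\mu) = I(XY; M_A \mid R) + I(XY; M_B \mid R)$. Since $IL(\Pi,\mu) = I(XY; M_A M_B \mid R)$ by the simplification already derived in the excerpt, the desired identity is equivalent to $I(XY; M_A \mid R) + I(XY; M_B \mid R) = I(XY; M_A M_B \mid R) + I(M_A; M_B \mid R)$. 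Expanding $I(XY; M_A M_B \mid R) = I(XY; M_A \mid R) + I(XY; M_B \mid M_A R)$ by the chain rule and invoking the co-information identity $I(XY; M_B \mid M_A R) = I(XY; M_B \mid R) - I(M_A; M_B \mid R) + I(M_A; M_B \mid XY R)$ reduces the entire claim to fact (iii), namely $I(M_A; M_B \mid XY R) = 0$.

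For the final bound it suffices, again by the middle equality, to show $I(M_A; M_B \mid R) \le IL(\Pi,\mu)$; intuitively the two messages are correlated only through the correlation of the inputs, which is exactly what $IL$ captures. I would bound $I(M_A; M_B \mid R) \le I(M_A; M_B Y \mid R) = I(M_A; Y \mid R) + I(M_A; M_B \mid Y R)$, where the last term vanishes by fact (ii), and then $I(M_A; Y \mid R) \le I(M_A; XY \mid R) \le I(M_A M_B; XY \mid R) = IL(\Pi,\mu)$ by monotonicity of mutual information. The individual steps are short chain-rule manipulations; the one point requiring genuine care is verifying the three Markov properties (i)--(iii) rigorously from the definition of the protocol, in particular that private randomness decouples $M_A$ from Bob's entire side once $X$ and the shared strings are fixed, since every equality above ultimately rests on them. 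This verification, rather than the information-theoretic algebra, is where I expect the main (if modest) effort to lie.
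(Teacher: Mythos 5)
Your proof is correct and follows essentially the same route as the paper's: both arguments rest on the chain rule together with the SMP conditional-independence facts ($M_A \perp Y M_B \mid X R_{AC} R_{BC}$, its mirror image, and $M_A \perp M_B \mid XY R_{AC} R_{BC}$), with the paper expanding $IL(\Pi,\mu)$ directly into $IC(\Pi,\mu) - I(M_A; M_B \mid R_{AC} R_{BC})$ while you reorganize the same computation via the co-information identity. The final bound $I(M_A; M_B \mid R_{AC} R_{BC}) \leq IL(\Pi,\mu)$ is also obtained by the same two-step chain-rule argument in both versions.
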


Perhaps surprisingly, we can avoid the factor of $2$ for worst-case input, an important fact for our practical application.
\begin{lemma}
\label{lem:icvsilworst}
For any $\Pi$, $f$ and $\epsilon \in [0, \frac{1}{2})$,
\begin{align}
IC (\Pi) &= IL (\Pi), \\
IC (f, \epsilon) &= IL(f, \epsilon).
\end{align}
\end{lemma}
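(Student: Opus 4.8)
The plan is to prove the protocol-level identity $IC(\Pi) = IL(\Pi)$ for an arbitrary $\Pi$; the function-level identity $IC(f, \epsilon) = IL(f, \epsilon)$ then follows at once by minimizing both sides over $\Pi \in \T_{sh}(f, \epsilon)$. For the protocol-level identity, one inequality is immediate: taking the maximum over $\mu$ in the bound $IL(\Pi, \mu) \leq IC(\Pi, \mu)$ of Lemma~\ref{lem:icvsildist} gives $IL(\Pi) \leq IC(\Pi)$. The real content is the reverse inequality $IC(\Pi) \leq IL(\Pi)$, which is where the improvement over the naive factor of $2$ must come from.

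The key structural fact I would exploit is that $IC(\Pi, \mu)$ is separable across the two players. Because $M_A$ is produced by the fixed stochastic map $\Pi_A$ from $(X, R_A, R_{AC})$, with $R_{AC}$ and $R_A$ having protocol-fixed distributions independent of the input, the conditional channel $M_A \mid X, R_{AC}$ does not depend on $\mu$; hence $I(X; M_A \mid R_{AC})$ is a function of the $X$-marginal $\mu_X$ alone. Symmetrically, $I(Y; M_B \mid R_{BC})$ depends only on $\mu_Y$. Writing $IC(\Pi, \mu) = g_A(\mu_X) + g_B(\mu_Y)$, the maximum over $\mu$ decouples into two independent maximizations, so it is attained at a product distribution $\mu^\ast = \mu_X^\ast \times \mu_Y^\ast$ with $IC(\Pi) = IC(\Pi, \mu^\ast)$.

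Next I would show the cross term vanishes on such a product input. Under $\mu^\ast$ the pairs $(X, R_A)$ and $(Y, R_B)$ are independent, and they remain independent after conditioning on $R_{AC}, R_{BC}$, since the shared coins are jointly independent of the inputs and the private coins. As $M_A$ is a function of $(X, R_A, R_{AC})$ and $M_B$ a function of $(Y, R_B, R_{BC})$, this yields $M_A \perp M_B$ given $R_{AC} R_{BC}$, i.e.~$I(M_A; M_B \mid R_{AC} R_{BC}) = 0$. Substituting into the identity $IC(\Pi, \mu) = IL(\Pi, \mu) + I(M_A; M_B \mid R_{AC} R_{BC})$ from Lemma~\ref{lem:icvsildist} gives $IC(\Pi, \mu^\ast) = IL(\Pi, \mu^\ast)$.

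Chaining the pieces, $IC(\Pi) = IC(\Pi, \mu^\ast) = IL(\Pi, \mu^\ast) \leq IL(\Pi) \leq IC(\Pi)$, which forces equality everywhere. The main obstacle, and the only step that is not routine, is the separability argument: one must check carefully that both the distribution of $R_{AC}$ and the conditional law of $M_A$ given $(X, R_{AC})$ are entirely fixed by the protocol and independent of $\mu$, so that the two summands of $IC(\Pi, \mu)$ genuinely decouple and the joint maximum is realized by a product distribution. Once this is in place, the vanishing of the cross term and the final chain of inequalities are short consequences of the independence structure built into the SMP model and the chain rule for mutual information.
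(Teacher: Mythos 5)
Your proposal is correct and follows essentially the same route as the paper: both directions hinge on (i) the separability of $IC(\Pi,\mu)$ into a term depending only on $\mu_X$ and one depending only on $\mu_Y$, so that its maximum is realized on a product distribution, and (ii) the fact that on product inputs the cross term $I(M_A;M_B\mid R_{AC}R_{BC})$ vanishes, making $IC$ and $IL$ coincide there. The paper phrases this as a lower bound on $IL(\Pi)$ by restricting the maximization to product distributions, whereas you phrase it as locating the maximizer of $IC(\Pi)$ at a product distribution; these are the same argument.
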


Hence, the information leakage, being equal to information complexity, also corresponds to the amortized communication complexity, providing an operational interpretation for it. Proofs for Lemmata~\ref{lem:icvsildist} and~\ref{lem:icvsilworst} are provided in Appendix~\ref{app:ilvsic}.

\section{Information Leakage Lower Bound}
\label{sec:main}

We show the following lower bound on the information leakage for computing the equality function  ($EQ_n$)
on $n$ bits in the SMP model.

\begin{theorem}
	For any $n$,  any $\epsilon \geq 0$, any $\delta_1 > 0$, and any $\delta_2 > 0$ satisfying $\epsilon+\delta_1+\delta_2<\frac{1}{2}$, it holds that 
\begin{align*}
IL (EQ_n, \epsilon) \geq  \delta_1 \left( 2 \sqrt{g_3 (\epsilon + \delta_1 +\delta_2)} \sqrt{n} - g_3 (\epsilon + \delta_1 +\delta_2)  - g_2 (n, n, \delta_2) - 10  \right) - 2 g_1 (2n),
\end{align*}
with $g_1 (x) = 2 \log (x + 1) + 10$,
$g_2 (x, y, z) = 2 \log (\frac{2 (x + y)}{z^2 \cdot \log e } +1) + 2$, and
$g_3 (x) = 2 \cdot (1/2 - x)^2 \cdot  \log e $.
\end{theorem}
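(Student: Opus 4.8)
The plan is to build a chain of reductions that carries the information leakage $IL(EQ_n,\epsilon)$ down through information complexity and the successive communication models to the worst-case private-coin communication complexity $CC_{priv}(EQ_n,\cdot)$, and then to plug in an improved finite-size Babai--Kimmel lower bound on the latter. First I would invoke Lemma~\ref{lem:icvsilworst} to replace leakage by information complexity, $IL(EQ_n,\epsilon)=IC(EQ_n,\epsilon)$. Rearranging Lemma~\ref{lem:icvsccav} then gives $IC(EQ_n,\epsilon)\ge CC_{av}(EQ_n,\epsilon)-2g_1(IC(EQ_n,\epsilon))$; since $g_1$ is increasing and the trivial protocol (Alice sends $x$, Bob sends $y$, referee compares) yields $IC(EQ_n,\epsilon)\le IC(EQ_n,0)\le CC_{av}(EQ_n,0)\le 2n$, the additive correction is bounded by $2g_1(2n)$. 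This already accounts for the trailing $-2g_1(2n)$ term and reduces the problem to lower bounding $CC_{av}(EQ_n,\epsilon)$.

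I would then lower bound $CC_{av}(EQ_n,\epsilon)$ by three sub-reductions. \emph{(i) Markov.} For the optimal average-length protocol, every input satisfies $\bE[\ell_A+\ell_B]\le CC_{av}(EQ_n,\epsilon)$, so by Markov's inequality applied to $\ell_A+\ell_B$, truncating the concatenated message at length $L=CC_{av}(EQ_n,\epsilon)/\delta_1$ fails with probability at most $\delta_1$; declaring failure upon truncation produces a bounded-length shared-randomness protocol of error $\epsilon+\delta_1$ and length $L$. This gives the crucial multiplicative estimate $CC_{av}(EQ_n,\epsilon)\ge \delta_1\,CC_{sh}(EQ_n,\epsilon+\delta_1)$, which is the source of the factor $\delta_1$ out front. \emph{(ii) Newman.} Sampling roughly $t\approx \frac{2(n+n)}{\delta_2^2\log e}$ shared strings and applying a Chernoff plus union bound over the $2^{2n}$ input pairs reduces the shared randomness to $\log t$ bits at the cost of increasing the error by $\delta_2$; having Alice and Bob generate and transmit this reduced randomness privately then costs $g_2(n,n,\delta_2)$ extra bits, yielding $CC_{sh}(EQ_n,\epsilon+\delta_1)\ge CC_{priv}(EQ_n,\epsilon+\delta_1+\delta_2)-g_2(n,n,\delta_2)$.

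The technical heart is \emph{(iii)}, an improved finite-size Babai--Kimmel bound asserting $CC_{priv}(EQ_n,\delta)\ge 2\sqrt{g_3(\delta)}\sqrt{n}-g_3(\delta)-10$ with $g_3(\delta)=2(1/2-\delta)^2\log e$. I would follow the Babai--Kimmel strategy of extracting a deterministic SMP protocol from the randomized one by fixing the randomness so as to preserve a good fraction of correct answers, using the combinatorial fact that a deterministic SMP protocol for equality essentially forces the message sets to distinguish all $2^n$ inputs, and optimizing the trade-off between the number of sampled messages and the incurred error. Substituting $\delta=\epsilon+\delta_1+\delta_2$ and composing all the inequalities gives
\begin{align*}
IL(EQ_n,\epsilon) = IC(EQ_n,\epsilon) &\ge CC_{av}(EQ_n,\epsilon) - 2g_1(2n) \\
&\ge \delta_1\,CC_{sh}(EQ_n,\epsilon+\delta_1) - 2g_1(2n) \\
&\ge \delta_1\left(CC_{priv}(EQ_n,\epsilon+\delta_1+\delta_2) - g_2(n,n,\delta_2)\right) - 2g_1(2n),
\end{align*}
and inserting the bound from \emph{(iii)} reproduces the claimed inequality exactly.

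The hard part will be step \emph{(iii)}: the asymptotic Babai--Kimmel argument hides large constants, and the entire point is to make every constant explicit and tight enough that the leading $2\sqrt{g_3(\delta)}\sqrt{n}$ term actually exceeds the logarithmic leakage of the quantum fingerprinting protocol at realistic input sizes. Pinning down the correct constant multiplying $\sqrt{n}$, while controlling the lower-order $-g_3(\delta)$ correction, requires a quantitatively optimized counting argument rather than the order-of-magnitude estimate that suffices asymptotically. By contrast, the Markov step \emph{(i)} is where the only super-logarithmic loss—the factor $\delta_1$—is incurred; as noted in the introduction, the compression underlying it is essentially tight, so this loss is intrinsic to the approach rather than an artifact to be removed.
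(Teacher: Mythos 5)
Your plan reproduces the paper's proof essentially step for step: $IL = IC$ via Lemma~\ref{lem:icvsilworst}, then $IC \geq CC_{av} - 2g_1(\cdot)$ by rearranging Lemma~\ref{lem:icvsccav} and bounding the argument of $g_1$ by the trivial $2n$, then the Markov and Newman reductions of Lemmata~\ref{lem:ccshvsav} and~\ref{lem:ccprivvssh}, and finally the sharpened Babai--Kimmel bound of Lemma~\ref{lem:bbkim}. Even the bookkeeping matches: the paper carries a $-4$ from the Markov step and a $-6$ from Babai--Kimmel, which you merge into the single $-10$.

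One step would fail as literally described. In (i) you apply Markov's inequality to the sum $\ell_A + \ell_B$ and ``truncate the concatenated message'' at $L = CC_{av}/\delta_1$; but in the SMP model Alice and Bob send their messages independently and neither sees the other's length, so no party can implement a threshold on the sum, and letting the referee discard long messages after receipt does not bound the communication cost. (Having each party truncate at $L$ individually would double the worst-case length and lose a factor of $2$ in the final bound.) The paper instead has Alice truncate at $c_A(x)/\delta_1$ and Bob at $c_B(y)/\delta_1$, where $c_A(x) = \bE[\ell_A(M_A(x))]$ and similarly for Bob; since $c_A(x)+c_B(y) \le CC_{av}(\Pi)$ for every input pair, the total bounded length is still $CC_{av}(\Pi)/\delta_1 + O(1)$, each truncation fires with probability at most $\delta_1$ by Markov, and the union-bound failure probability $2\delta_1$ is halved back to an error increase of $\delta_1$ because the referee answers uniformly at random upon abort. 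With that repair your chain of inequalities is exactly the paper's. The remaining caveat is that step (iii), the explicit-constant Babai--Kimmel bound, is the bulk of the technical work and is only sketched in your proposal; the paper's proof (following Gavinsky, Regev and de Wolf) derandomizes Alice by sampling $t \approx c_B/g_3(\epsilon)$ messages, applies Hoeffding plus a union bound over Bob's $2^{c_B}$ possible messages, deduces $t\, c_A \ge n$ from the injectivity of the resulting deterministic Alice, and minimizes $c_A + c_B$ subject to $c_A c_B \gtrsim g_3(\epsilon)\, n$ --- which is indeed the strategy you outline.
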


\subsection{Sketch of Proof}

The high-level idea can be split into two parts as follows. On one side, we show
a parameterized lower bound on the communication complexity of $EQ_n$ in 
the private coin SMP model in terms of the allowed 
worst-case error $\epsilon$.
On the other side, we show a general link between $IL$ and $CC_{priv}$ of Boolean functions through a series of reductions.
First, we can relate $CC_{priv}$ and $CC_{sh}$ using a Newman type 
argument~\cite{New}. This only incurs a manageable additive loss.
We then use a Markov inequality argument to relate $CC_{sh}$  and $CC_{av}$.
This is the most costly reduction.
The equivalence, up to a small additive logarithmic factor, of $IL$ and $CC_{av}$ then completes the argument.

\subsection{Statement of Lemmata}

We obtain our  main result by combining the following lemmata.
Their proofs are relegated to Appendix~\ref{app:icvscc}.

We show a variation of Newman's theorem, relating $CC_{priv}$ and $CC_{sh}$. 
Jain and Klauck~\cite{JK09} noted that such a result holds for SMP models; to obtain better bounds, 
we adapt the proofs from Ref.~\cite{Pit14, Bra11} to show:

\begin{lemma}
\label{lem:ccprivvssh}
For any Boolean function $f$, any $\epsilon \geq 0$, and any $\delta > 0$ satisfying $\epsilon + \delta < \frac{1}{2}$, denote $n_A = \log |X|$ and $n_B = \log |Y|$. Then it holds that
\begin{align*}
CC_{priv} (f, \epsilon + \delta) \leq CC_{sh} (f, \epsilon) +  g_2 (n_A, n_B, \delta),
\end{align*}
with $g_2 (x, y, z) = 2 \log (\frac{2 (x + y)}{z^2 \cdot \log e} +1) + 2$.
\end{lemma}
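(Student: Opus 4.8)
The plan is to reduce the shared-randomness SMP model to the private-coin model by showing that the referee and players can agree, ahead of time, on a small fixed set of values for the shared randomness strings $r_{AC}$ and $r_{BC}$, sampled uniformly from that set, while only slightly increasing the error. This is the SMP analogue of Newman's theorem. Since Alice and the referee share $r_{AC}$, and Bob and the referee share $r_{BC}$, we treat the two shared strings separately but symmetrically. The key observation is that for a fixed input $(x,y)$, the error of the protocol is an average over the shared randomness of the per-$(r_{AC},r_{BC})$ error, and this average is well-approximated by an empirical average over a small random sample.

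**First I would** set up the probabilistic argument. Let $\Pi$ be an optimal protocol in $\T_{sh}(f,\epsilon)$ with $CC_{sh}(\Pi)=CC_{sh}(f,\epsilon)$. For each fixed value of $(r_{AC}, r_{BC})$, the protocol induces a sub-protocol using only private randomness $r_A, r_B, r_C$; denote its error on input $(x,y)$ by $e_{x,y}(r_{AC},r_{BC})$, a random variable in $[0,1]$ whose expectation over $(R_{AC},R_{BC})$ is at most $\epsilon$ for every $(x,y)$. I would then draw $T$ independent samples $(r_{AC}^{(i)}, r_{BC}^{(i)})_{i=1}^{T}$ of the shared randomness and consider the new protocol that picks one of these $T$ samples uniformly, using $\lceil \log T\rceil$ bits of fresh public (then privatized) randomness. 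By a Chernoff/Hoeffding bound, for each fixed $(x,y)$ the empirical average error over the $T$ samples deviates from its mean by more than $\delta$ with probability at most $\exp(-2\delta^2 T)$.

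**The main obstacle**, and the step that determines the constant $g_2$, is the union bound over all inputs together with the precise bookkeeping of the sample size $T$. There are $|X|\cdot|Y| = \exp_2(n_A + n_B)$ input pairs, so a union bound requires $\exp_2(n_A+n_B)\cdot\exp(-2\delta^2 T) < 1$, i.e. it suffices to take $T$ of order $(n_A+n_B)/\delta^2$ up to the $\log e$ conversion between bases. Translating this threshold carefully yields $\log T \le 2\log\!\bigl(\tfrac{2(n_A+n_B)}{\delta^2\log e}+1\bigr)+2 = g_2(n_A,n_B,\delta)$; getting the small additive constants and the factor of $2$ sharp (rather than loose, as in the asymptotic treatment of Ref.~\cite{JK09}) is exactly where adapting the refined analysis of Ref.~\cite{Pit14, Bra11} pays off, and this is the delicate part of the calculation.

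**Finally I would** assemble the pieces: the existence argument guarantees a fixed choice of $T$ samples such that the resulting protocol has worst-case error at most $\epsilon+\delta$. This protocol uses only private randomness (the $r_{AC}, r_{BC}$ values are now hardwired constants, and the uniform choice among $T$ samples costs $\lceil\log T\rceil$ bits, which players can privately generate and send, or absorb into the referee's private coin), so it lies in $\T_{priv}(f,\epsilon+\delta)$. Its communication cost exceeds that of $\Pi$ by at most $g_2(n_A,n_B,\delta)$ bits, giving $CC_{priv}(f,\epsilon+\delta)\le CC_{sh}(f,\epsilon)+g_2(n_A,n_B,\delta)$ as claimed.
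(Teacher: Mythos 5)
Your overall strategy (a Newman-type derandomization of the shared randomness via Hoeffding plus a union bound over the $\exp_2(n_A+n_B)$ input pairs) is the same as the paper's, and the sample-size bookkeeping is in the right ballpark. But there is a genuine gap in the central construction. You draw $T$ samples of the \emph{pair} $(r_{AC}^{(i)},r_{BC}^{(i)})$ and have the new protocol ``pick one of these $T$ samples uniformly.'' That single index $i$ must be known to Alice (to compute $m_A$ from $r_{AC}^{(i)}$), to Bob (to compute $m_B$ from $r_{BC}^{(i)}$), and to the referee. A uniformly random index shared by Alice and Bob is exactly the Alice--Bob shared randomness that the model forbids (and must forbid: with it, equality becomes trivial and the lower bound this lemma feeds into would be false). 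Your proposed fix --- each player ``privately generates and sends'' the index --- produces two independent indices $i_A$ and $i_B$, so the protocol actually runs on the pair $(r_{AC}^{(i_A)},r_{BC}^{(i_B)})$ drawn from the product of the empirical marginals. Its error is the double average $\frac{1}{T^2}\sum_{i,j}e_{x,y}(r_{AC}^{(i)},r_{BC}^{(j)})$, whose $T^2$ terms are not independent, so the Hoeffding bound you proved for the diagonal average $\frac{1}{T}\sum_i e_{x,y}(r_{AC}^{(i)},r_{BC}^{(i)})$ says nothing about it.

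The paper avoids this by derandomizing the two shared strings \emph{sequentially}: first fix $t$ values of $r_{AC}$ (averaging the error over everything else, including $R_{BC}$), losing $\delta/2$ in error; then, in the resulting protocol, fix $t$ values of $r_{BC}$, losing another $\delta/2$. Alice and Bob each privately pick their own index and prepend it to their message, which is legitimate precisely because the two indices point into independent lists. This two-stage split is also where the factor $2$ in the numerator of $g_2$ comes from (one needs $t\approx\frac{n_A+n_B}{2(\delta/2)^2\log e}=\frac{2(n_A+n_B)}{\delta^2\log e}$), a point your write-up asserts rather than derives. Alternatively you could keep your joint sampling and control the off-diagonal double average with a bounded-differences (McDiarmid) inequality, but that weakens the exponent by a factor of $2$ and would change the constants in $g_2$. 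As written, the step from your concentration bound to a legal private-coin protocol does not go through.
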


A Markov inequality argument allows us to relate $CC_{sh}$ and $CC_{av}$.

\begin{lemma}
\label{lem:ccshvsav}
For any Boolean function $f$, any $\epsilon \geq 0$, and any $\delta > 0$ satisfying $\epsilon + \delta < \frac{1}{2}$, 
\begin{align*}
CC_{sh} (f, \epsilon + \delta) \leq (1 / \delta) \cdot CC_{av} (f, \epsilon) + 4.
\end{align*}
\end{lemma}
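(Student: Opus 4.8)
The goal: show that average-length communication can be converted to bounded-length communication at a multiplicative cost 1/δ plus small additive overhead, while increasing error by δ.

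Setup: Start with an optimal average-length protocol Π achieving CC_av(f,ε). By definition CC_av(Π,(x,y)) = E[ℓ_A + ℓ_B] ≤ CC_av(f,ε) for all (x,y).

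Markov: The expected message length is bounded. Apply Markov's inequality to the random variable (ℓ_A + ℓ_B) so that with probability ≥ 1-δ, the total length is ≤ (1/δ)·CC_av(f,ε). Truncate/abort when the message would be too long. This introduces additional error at most δ.

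The obstacle: In SMP, Alice and Bob send messages independently — neither sees the other's length. So we can't jointly threshold ℓ_A + ℓ_B. Need to handle each party's length separately, or use shared randomness to coordinate the truncation.

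Let me think about the additive "+4" and how to structure this carefully for SMP.

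---

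The plan is to take an optimal average-length protocol $\Pi \in \T_{av}(f, \epsilon)$ achieving $CC_{av}(\Pi) = CC_{av}(f, \epsilon)$, and convert it into a shared-randomness protocol with bounded message lengths, paying a multiplicative factor $1/\delta$ in communication and an additive $\delta$ in error. The key tension in the SMP model is that Alice and Bob send their messages simultaneously and independently, so we cannot simply threshold the \emph{joint} length $\ell_A(M_A) + \ell_B(M_B)$; each party must decide locally whether to truncate.

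\textbf{Handling the two parties separately.} First I would split the per-input average length. Since $CC_{av}(\Pi, (x,y)) = \bE[\ell_A(M_A(x))] + \bE[\ell_B(M_B(y))] \leq CC_{av}(f, \epsilon)$ holds for every pair, in particular the marginals satisfy $\bE[\ell_A(M_A(x))] \leq CC_{av}(f, \epsilon)$ and $\bE[\ell_B(M_B(y))] \leq CC_{av}(f, \epsilon)$ for all $x$, $y$ (taking the worst case over the other player's input). I would then apply Markov's inequality to each party's message length individually: for a threshold $L_A = \frac{2}{\delta} CC_{av}(f, \epsilon)$, the probability that $\ell_A(M_A(x))$ exceeds $L_A$ is at most $\delta/2$, and similarly for Bob with threshold $L_B$. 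When a party's message would exceed its threshold, it sends a fixed abort symbol instead (one extra bit to flag this), and the referee outputs an arbitrary fixed value. By the union bound, the probability that either party aborts is at most $\delta/2 + \delta/2 = \delta$, so the additional error incurred is at most $\delta$, giving a protocol in $\T_{sh}(f, \epsilon + \delta)$.

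\textbf{Accounting for communication.} The truncated protocol sends at most $\lceil L_A \rceil + 1$ bits from Alice and $\lceil L_B \rceil + 1$ bits from Bob, using fixed-length encodings padded to the threshold. The total worst-case communication is thus bounded by $L_A + L_B + O(1) = \frac{4}{\delta} CC_{av}(f, \epsilon) + O(1)$. To sharpen the constant from $4/\delta$ down to the claimed $1/\delta$, I would instead apply Markov to the \emph{single} random variable $\ell_A(M_A(x)) + \ell_B(M_B(y))$ (which has expectation at most $CC_{av}(f,\epsilon)$) and use the shared randomness $R_{AC}, R_{BC}$ to correlate the abort decisions: the referee, holding both shared strings, can reconstruct whether the joint length would have exceeded the threshold $\frac{1}{\delta} CC_{av}(f,\epsilon)$ only if each party's contribution is communicated, but since each party knows its own randomness and the protocol is fixed, a cleaner route is to have both parties use a common threshold on the joint length computed from correlated samples. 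The additive constant $4$ absorbs the ceiling operations and the abort-flag bits.

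\textbf{Main obstacle.} The crux is the simultaneity constraint: achieving the tight $1/\delta$ factor rather than a looser $2/\delta$ or $4/\delta$ requires thresholding the joint length $\ell_A + \ell_B$ rather than the two marginals separately, yet in SMP neither player can observe the other's message length. I expect this to be the hard part, and the resolution should exploit that the shared randomness $R_{AC}, R_{BC}$ lets the referee (and implicitly the parties, via their shared coins) coordinate a single global truncation event rather than two independent ones, so that a single application of Markov with failure probability $\delta$ suffices. The additive $+4$ then cleanly covers the encoding overhead from the two ceilings and the abort signaling.
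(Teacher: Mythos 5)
Your first, concrete construction is sound but only yields $CC_{sh}(f,\epsilon+\delta) \leq \frac{4}{\delta} CC_{av}(f,\epsilon) + O(1)$, and the route you sketch for sharpening this to $\frac{1}{\delta}$ does not work: thresholding the \emph{joint} length $\ell_A(M_A(x)) + \ell_B(M_B(y))$ is impossible in SMP no matter how the shared randomness is used, because $R_{AC}$ tells Alice nothing about Bob's realized message length (and vice versa), so neither party can decide whether the joint length exceeds a threshold. This is a genuine gap. The paper closes the factor of $4$ with two ideas you are missing, neither of which requires any coordination. First, instead of bounding each marginal expectation by the full $CC_{av}(f,\epsilon)$, use the per-input quantities $c_A(x) = \bE_{\Pi(x,y)}[\ell_A(M_A(x))]$ and $c_B(y) = \bE_{\Pi(x,y)}[\ell_B(M_B(y))]$, which Alice and Bob can each compute locally from their own input. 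Markov gives $\Pr[\ell_A(M_A(x)) \geq c_A(x)/\delta] \leq \delta$ and likewise for Bob, and the two thresholds sum to at most $\frac{1}{\delta}\max_{(x,y)}(c_A(x)+c_B(y)) = \frac{1}{\delta} CC_{av}(\Pi)$; this alone already saves your first factor of $2$. Second, the abort probability is now $2\delta$ by the union bound, which looks too large, but since $f$ is Boolean the referee can output a \emph{uniformly random} bit upon seeing an abort flag rather than an arbitrary fixed value; an abort then contributes error only $\frac{1}{2}\cdot 2\delta = \delta$, recovering the other factor of $2$. (Your fixed-output fallback can be wrong with probability $1$ conditioned on abort for some input, which is why you were forced to keep the abort probability down to $\delta$ and pay for it in the threshold.)

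Combining the two, the resulting protocol $\Pi^\prime$ has worst-case communication at most $\frac{1}{\delta} CC_{av}(f,\epsilon) + 4$ (the $+4$ absorbing ceilings and the two abort flags) and error at most $\epsilon + \delta$, as claimed. Note also that the hypothesis that $f$ is Boolean, which appears explicitly in the lemma statement, is exactly what licenses the random-guess step; your write-up never uses it, which is a sign the intended argument is different from yours.
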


We can  link  $IL$ and $CC_{priv}$ by combining the above two results and the link 
between $IL$ and $CC_{av}$ proven in Section~\ref{sec:ic_def}.
 
\begin{lemma}
\label{lem:ilvsccpriv}
For any Boolean function $f$, any $\epsilon \geq 0$, any $\delta_1 > 0$, and  any $\delta_2 > 0$ satisfying $\epsilon + \delta_1 + \delta_2 < \frac{1}{2}$, denote $n_A = \log |X|$ and $n_B = \log |Y|$. Then it holds that 
\begin{align*}
IL (f, \epsilon) \geq  \delta_1 \big( CC_{priv} (f, \epsilon + \delta_1 +\delta_2) - g_2 (n_A, n_B, \delta_2) - 4  \big) - 2 g_1 (CC_{priv} (f, \epsilon)),
\end{align*}
with $g_1 (x) = 2 \log (x + 1) + 10$, and
$g_2 (x, y, z) = 2 \log (\frac{2 ( x + y)}{z^2 \cdot \log e } +1) + 2$.
\end{lemma}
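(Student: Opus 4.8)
The plan is to obtain the claimed lower bound on $IL(f,\epsilon)$ by chaining the three assumed upper bounds "downward" along the complexity hierarchy $CC_{priv} \geq CC_{sh} \geq CC_{av} \geq IC = IL$, and then rearranging. Every one of Lemmata~\ref{lem:ccprivvssh}, \ref{lem:ccshvsav}, and \ref{lem:icvsccav} bounds a higher complexity in terms of a lower one, so the substance of the argument is to compose them in the correct order, track the error budget, and invert the final composite inequality into a lower bound on $IL$. Write $n_A = \log|X|$ and $n_B = \log|Y|$.

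First I would compose the communication-side lemmata. Applying Lemma~\ref{lem:ccprivvssh} with allowed error $\epsilon + \delta_1$ and Newman-type increase $\delta_2$, and then Lemma~\ref{lem:ccshvsav} with allowed error $\epsilon$ and Markov increase $\delta_1$, gives
\begin{align*}
CC_{priv} (f, \epsilon + \delta_1 + \delta_2) &\leq CC_{sh} (f, \epsilon + \delta_1) + g_2 (n_A, n_B, \delta_2) \\
&\leq \tfrac{1}{\delta_1} CC_{av} (f, \epsilon) + 4 + g_2 (n_A, n_B, \delta_2).
\end{align*}
The hypothesis $\epsilon + \delta_1 + \delta_2 < \tfrac{1}{2}$ guarantees that both invocations are legal, since it implies both $\epsilon + \delta_1 + \delta_2 < \tfrac{1}{2}$ (for Lemma~\ref{lem:ccprivvssh}) and $\epsilon + \delta_1 < \tfrac{1}{2}$ (for Lemma~\ref{lem:ccshvsav}). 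Rearranging this composite inequality for $CC_{av}(f,\epsilon)$ turns the factor $1/\delta_1$ into a multiplicative $\delta_1$ and yields
\begin{align*}
CC_{av} (f, \epsilon) \geq \delta_1 \big( CC_{priv} (f, \epsilon + \delta_1 + \delta_2) - g_2 (n_A, n_B, \delta_2) - 4 \big).
\end{align*}

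Next I would bring in the information side. By Lemma~\ref{lem:icvsccav} together with the identity $IC(f,\epsilon) = IL(f,\epsilon)$ of Lemma~\ref{lem:icvsilworst} (both valid for $\epsilon \in [0,\tfrac12)$), we have $CC_{av}(f,\epsilon) \leq IL(f,\epsilon) + 2 g_1(IL(f,\epsilon))$. Substituting the previous display and isolating $IL(f,\epsilon)$ gives
\begin{align*}
IL (f, \epsilon) \geq \delta_1 \big( CC_{priv} (f, \epsilon + \delta_1 + \delta_2) - g_2 (n_A, n_B, \delta_2) - 4 \big) - 2 g_1 (IL (f, \epsilon)).
\end{align*}
The final step is to replace the residual $g_1(IL(f,\epsilon))$ by $g_1(CC_{priv}(f,\epsilon))$, which is permissible because $g_1(x) = 2\log(x+1)+10$ is monotone increasing and $IL(f,\epsilon) = IC(f,\epsilon) \leq CC_{av}(f,\epsilon) \leq CC_{priv}(f,\epsilon)$ by the chain of inequalities in the "Link Between Complexities" subsection combined with $IC \leq CC_{av}$. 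This produces exactly the stated bound.

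The only genuine subtlety, rather than an obstacle, is the error bookkeeping: one must attach $\delta_1$ to the (costly) Markov reduction of Lemma~\ref{lem:ccshvsav} and $\delta_2$ to the Newman-type reduction of Lemma~\ref{lem:ccprivvssh}, since only the former incurs the $1/\delta_1$ blow-up that we want to convert into the dominant $\delta_1$ prefactor. The inversion of the composed upper bounds into a lower bound, and the monotonicity replacement in the $g_1$ term, are the two places where care is needed; everything else is direct substitution using results established earlier in the paper.
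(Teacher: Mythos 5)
Your proposal is correct and follows essentially the same route as the paper: the paper's proof is exactly the chain $IL = IC \geq CC_{av} - 2g_1(CC_{priv}) \geq \delta_1(CC_{sh}(\cdot,\epsilon+\delta_1)-4) - 2g_1(CC_{priv}) \geq \delta_1(CC_{priv}(\cdot,\epsilon+\delta_1+\delta_2) - g_2 - 4) - 2g_1(CC_{priv})$, which is your composition read in the opposite direction. Your explicit justification of the replacement $g_1(IL(f,\epsilon)) \leq g_1(CC_{priv}(f,\epsilon))$ via monotonicity and $IC \leq CC_{av} \leq CC_{priv}$ matches the remark the paper makes after Lemma~\ref{lem:icvsccav}.
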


We have the following lower bound on $CC_{priv} (EQ_n, \epsilon)$,
by adapting a simplification of an argument from Babai and Kimmel~\cite{BK97} due to Gavinsky, Regev and de Wolf~\cite{GRW08}.

\begin{lemma}
\label{lem:bbkim}
For any $n \in \mathbb{N}$ and any $\epsilon \in [0, \frac{1}{2} )$, the following holds:
\begin{align*}
CC_{priv} (EQ_n, \epsilon) \geq 2 \sqrt{g_3 (\epsilon)} \sqrt{n} - g_3 (\epsilon) - 6,
\end{align*}
with $g_3 (x) = 2 \cdot (1/2 - x)^2 \cdot \log e $.
\end{lemma}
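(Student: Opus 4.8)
The plan is to prove the lower bound
\[
CC_{priv}(EQ_n, \epsilon) \geq 2\sqrt{g_3(\epsilon)}\sqrt{n} - g_3(\epsilon) - 6
\]
by the standard ``counting nearly-disjoint acceptance distributions'' argument of Babai--Kimmel, in the streamlined form due to Gavinsky--Regev--de Wolf. Fix an optimal private-coin protocol $\Pi \in \T_{priv}(EQ_n, \epsilon)$ and write $c_A = \lceil \log |M_A| \rceil$, $c_B = \lceil \log |M_B| \rceil$, so that $CC_{priv}(\Pi) = c_A + c_B$. For each input $x$ Alice's message induces a probability distribution $p_x$ on $M_A$, and similarly $q_y$ on $M_B$ for Bob; the referee's decision on a received pair $(m_A, m_B)$ together with his own coin $r_C$ determines an acceptance probability, so the overall acceptance probability on $(x,y)$ is a bilinear form $a(x,y) = \sum_{m_A, m_B} p_x(m_A)\, q_y(m_B)\, A(m_A, m_B)$ for some matrix $A$ with entries in $[0,1]$ (obtained by averaging the referee's output over $r_C$). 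The correctness condition says $a(x,x) \geq 1 - \epsilon$ and $a(x,y) \leq \epsilon$ for all $x \neq y$, so the diagonal and off-diagonal values are separated by a gap of $1 - 2\epsilon$.

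The heart of the argument is to show that this gap forces the distributions $p_x$ (equivalently the $q_y$) to be pairwise far apart in a suitable sense, and then to bound how many pairwise-far distributions can live on a message set of size $|M_A|$. First I would argue, via a symmetrization or averaging step, that it suffices to control one side; the clean route is to consider, for each pair $x \neq x'$, the quantity $\sum_{m_B} \big(a(x, \cdot) - a(x', \cdot)\big)$ tested against Bob's distributions $q_x, q_{x'}$, and to extract from the $(1-2\epsilon)$-gap a lower bound on a statistical-distance-like quantity between $p_x$ and $p_{x'}$. Concretely, evaluating the bilinear form on the diagonal versus off-diagonal entries gives, for each pair, an inequality of the form (difference of acceptance probabilities) $\geq 1 - 2\epsilon$, and bounding the left-hand side by the total variation distance $\|p_x - p_{x'}\|_1$ (since $A$ has entries in $[0,1]$) yields $\|p_x - p_{x'}\|_1 \geq 1 - 2\epsilon$ for all $x \neq x'$. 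Thus the $2^n$ distributions $\{p_x\}$ are pairwise at $\ell_1$-distance at least $1 - 2\epsilon$.

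The final step is a packing bound: how many distributions on a set of size $|M_A|$ can be pairwise $\ell_1$-far? The sharp tool here is an $\ell_2$ / Hilbert-space argument on the square-root vectors $\sqrt{p_x}$, which are unit vectors in $\mathbb{R}^{|M_A|}$; a pairwise $\ell_1$ lower bound translates into a pairwise upper bound on inner products $\langle \sqrt{p_x}, \sqrt{p_{x'}}\rangle$, and a rank/volume argument then forces $|M_A|$ to be large relative to the number $2^n$ of inputs and the gap. Carrying the constants through this $\ell_2$ bound is exactly what produces the functional $g_3(\epsilon) = 2(1/2 - \epsilon)^2 \log e$ and the $\sqrt{n}$ scaling: one obtains roughly $c_A \gtrsim \tfrac{1}{2}\big(2\sqrt{g_3(\epsilon)}\sqrt{n} - g_3(\epsilon)\big)$ for each of Alice and Bob, and summing the two symmetric contributions and absorbing the ceiling/rounding slack into the additive constant $-6$ gives the claimed bound. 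I expect the main obstacle to be the packing step: getting a \emph{tight} constant (rather than just an asymptotic $\Omega(\sqrt{n})$) requires carefully optimizing the inner-product / dimension-counting inequality and tracking how the $1 - 2\epsilon$ gap enters quadratically, which is precisely where the factor $(1/2 - \epsilon)^2$ and the base-$e$ logarithm in $g_3$ come from; the reductions to the bilinear form and to pairwise $\ell_1$-distance are comparatively routine.
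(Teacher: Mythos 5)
Your opening reduction is fine: the $(1-2\epsilon)$ acceptance gap does force $\|p_x - p_{x'}\|_1 \geq 2(1-2\epsilon)$ for all $x \neq x'$ (test $p_x - p_{x'}$ against $B(m_A) = \sum_{m_B} q_x(m_B) A(m_A,m_B) \in [0,1]$). The fatal gap is the packing step. A family of distributions on a support of size $|M_A|$ that are pairwise far in $\ell_1$ (equivalently, whose square-root vectors have pairwise small inner products) can have cardinality \emph{exponential} in $|M_A|$: for instance, taking $p_x$ uniform on $\{(i, C(x)_i)\}_{i}$ for a code $C$ of relative distance $1-2\epsilon$ over an alphabet of size $O(1/\epsilon)$ gives $2^n$ distributions on a support of size $O(n/\epsilon)$ at pairwise $\ell_1$-distance at least $2(1-2\epsilon)$. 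Hence the best any one-sided packing or volume argument can yield is $2^n \leq \exp(O(|M_A|))$, i.e.\ $c_A \geq \log n - O(1)$ --- the logarithmic bound that quantum fingerprinting also meets --- and it can never produce $\Omega(\sqrt{n})$. Relatedly, your intermediate claim that \emph{each} of $c_A, c_B$ is at least about $\sqrt{g_3(\epsilon) n}$ is false: Alice may send $x$ verbatim ($c_A = n$) while Bob sends only $O(\log n)$ bits. The correct statement is a \emph{product} bound, roughly $c_A \cdot (c_B + 2) \geq g_3(\epsilon)\, n$, and the $\sqrt{n}$ in the lemma arises only from minimizing the sum $c_A + c_B$ subject to that constraint.

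The paper obtains the product bound by the Babai--Kimmel derandomization in the Gavinsky--Regev--de Wolf form, which is precisely where $c_A$ and $c_B$ get coupled: sample $t = \lceil (c_B+2)/(2\delta^2 \log e)\rceil$ independent copies of Alice's random string; Hoeffding's inequality plus a union bound over all $2^{c_B}$ possible messages of Bob shows that some fixed choice of the $t$ strings makes the empirical acceptance probability $\delta$-close to the true one simultaneously for every $m_B$. Alice then deterministically sends the concatenation of the $t$ corresponding messages ($t c_A$ bits). A deterministic Alice must send distinct messages on distinct inputs --- otherwise $(x,x)$ and $(x',x)$ receive the same acceptance probability, contradicting error $\epsilon+\delta < 1/2$ --- so $t c_A \geq n$; substituting $t$, letting $\delta \to 1/2 - \epsilon$, and optimizing $c_A + c_B$ under $c_A(c_B+2) \gtrsim g_3(\epsilon) n$ gives the stated $2\sqrt{g_3(\epsilon)}\sqrt{n} - g_3(\epsilon) - 6$. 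If you want to retain a geometric flavour, the interaction with Bob's message set must enter the argument (as it does through the union bound over $M_B$); pairwise separation of the $p_x$ alone is information-theoretically too weak to prove the lemma.
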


In order to compare with the result from Babai and  Kimmel, we can take $\epsilon = 0.01$ and large enough input such that our bound essentially gives 
\begin{align}
CC_{priv} (EQ_n, 0.01) \geq 1.66 \sqrt{n},
\end{align}
in contrast to 
\begin{align}
CC_{priv} (EQ_n, 0.01) \geq 0.1 \sqrt{n},
\end{align}
an improvement by more than an order of magnitude.

Inserting the result of Lemma~\ref{lem:bbkim} into Lemma~\ref{lem:ilvsccpriv}, 
and also using the trivial bound $CC_{priv} (EQ_n, \epsilon) \leq 2n$, gives us the desired bound, that is
\begin{align*}
IL (EQ_n, \epsilon) \geq  \delta_1 \big( 2 \sqrt{g_3 (\epsilon + \delta_1 +\delta_2)} \sqrt{n} - g_3 (\epsilon + \delta_1 +\delta_2)  - g_2 (n, n, \delta_2) - 10 \big) - 2 g_1 (2n).
\end{align*}

\section{Connection with Experimental Quantum Fingerprinting}

Recent theoretical and experimental advances have led to the possibility of demonstrating quantum fingerprinting protocols that are capable of beating the classical communication lower bound for equality in the SMP model. In this section, we argue that these practical protocols can also surpass the lower bound on classical information leakage that we have derived in this work.

Practical quantum fingerprinting protocols are based on coherent states of light. For fixed input size $n$ and error probability $\epsilon$, it is possible, by introducing an arbitrarily small additional error, to effectively make the protocol operate in an Hilbert space of dimension equivalent to one of $O(\mu \log n)$ qubits, where $\mu$ is the total mean photon number. Thus, since the dimension of the signals gives an upper bound on the information leakage, the quantum information leakage (QIL) of such practical quantum fingerprinting protocols satisfies

\begin{align}
QIL=O(\mu \log n).
\end{align}

The precise expression for the upper bound on the information leakage of this quantum fingerprinting protocol can be found in Ref.~\cite{AL14}.

\begin{figure}[h!]
\begin{center}
\includegraphics[width=0.8\columnwidth]{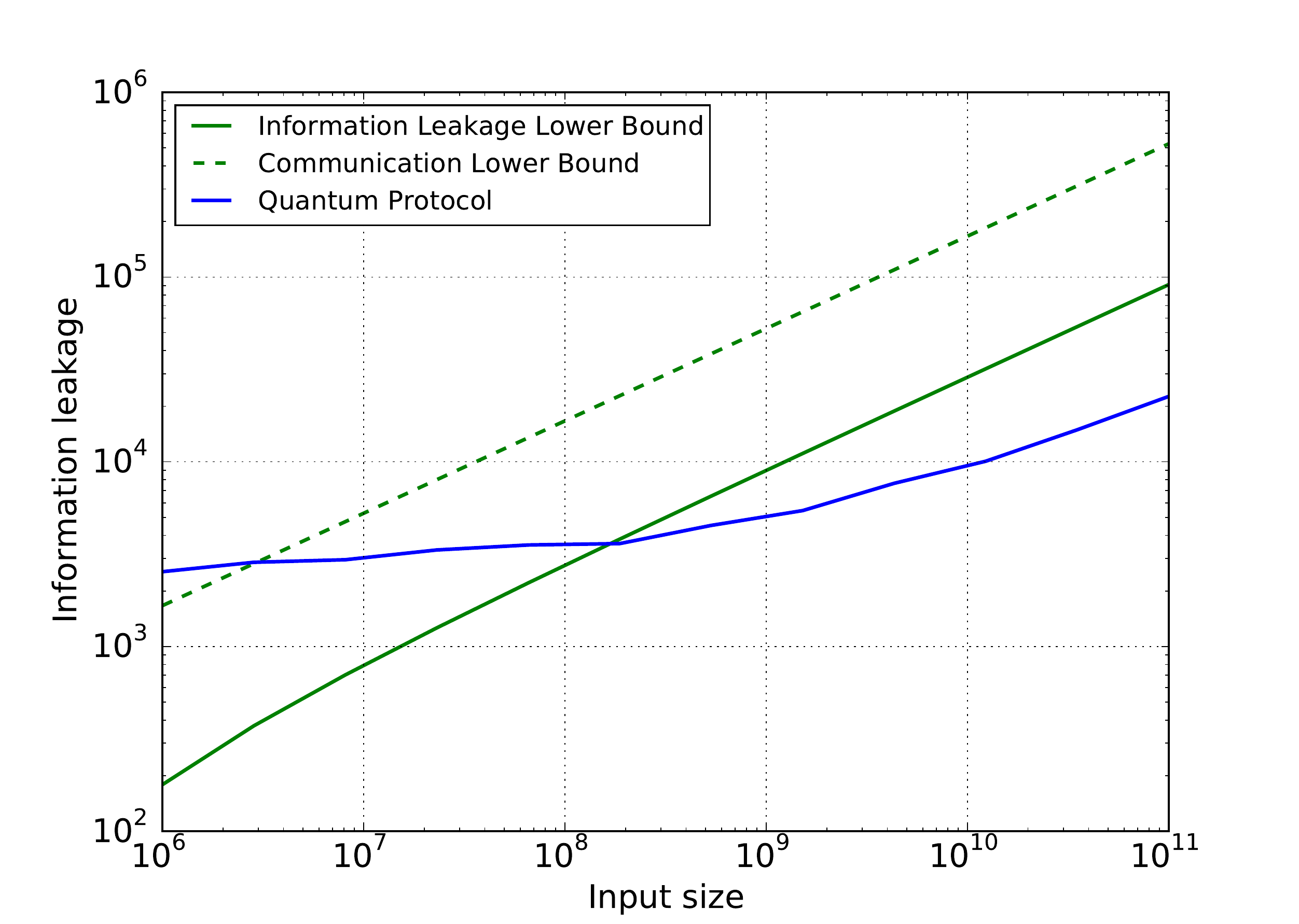}
\caption{Information leakage for practical quantum fingerprinting protocols compared to classical lower bounds. As experimental parameters, we have assumed a visibility of $\nu=0.98$, dark count rate of $0.11 Hz$, transmissivity of $\eta=0.3$ and probability of error $\epsilon=0.01$. }\label{Plot}
\end{center}
\end{figure}

In Fig. \ref{Plot}, we plot the upper bound on the information leakage of quantum fingerprinting  for realistic experimental parameters and compare it to our classical information leakage lower bound and the classical communication lower bound. The information leakage lower bound was optimized over $\delta_1$ and $\delta_2$ under the constraint that $\delta_1+\delta_2+\epsilon<\frac{1}{2}$. Even though the information leakage lower bound is significantly smaller than the communication lower bound, quantum protocols can still operate in a regime where they leak less information than any classical protocol.

\section{Conclusion}

We proved lower bounds on the information leakage of any classical 
protocol computing the equality function on $n$ bits in the SMP model.
Asymptotic bounds scaling as $\Omega ( \sqrt{n} )$ were already known, 
but the hidden constants were too large for the goal we have in mind: 
obtaining finite size bounds in a realistic regime accessible to practical 
quantum fingerprinting protocols, in order to show an advantage for 
such quantum protocols over any classical protocol. We succeeded in 
this endeavour, and the next step is thus to perform an experiment 
demonstrating this quantum advantage in terms of information leakage. 
Since practical quantum protocols within reach of current technology 
achieve low information transmission through protocols using a large number of signals, such a comparison in terms of classical information leakage 
instead of classical communication cost should allow for an actual experimental
demonstration of a quantum advantage.

\section*{Acknowledgements}

The authors are grateful to Norbert L\"utkenhaus for useful discussions and feedback on this manuscript, and
to Harry Buhrman for the idea to investigate the information leakage advantage of practical quantum fingerprinting protocols.

JMA acknowledges support from the Mike and Ophelia Lazaridis Fellowship, the Singapore Ministry of Education (partly
through the Academic Research Fund Tier 3 MOE2012-
T3-1-009) and the National Research Foundation of Singapore,
Prime Minister’s Office, under the Research Centres
of Excellence programme. 
DT was supported in part by NSERC, Industry Canada and ARL CDQI program. 
IQC and
PI are supported in part by the Government of Canada and the Province of
Ontario.

\appendix

\section{Proofs of Compression Lemma}
\label{app:comp}

We restate Lemma~\ref{lem:icvsccav} here for convenience.

\begin{lemma}
For any $f$ and any $\epsilon \in [0, \frac{1}{2})$, it holds that 
\begin{align}
CC_{av} (f, \epsilon) &\leq IC (f, \epsilon) + 2 g_1 (IC(f, \epsilon)),
\end{align}
with $g_1 (x) = 2 \log (x + 1) + 10$.

\end{lemma}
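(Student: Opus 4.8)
The plan is to start from a (near-)optimal protocol $\Pi \in \T_{sh}(f,\epsilon)$ with $IC(\Pi) = IC(f,\epsilon)$ and to convert it into an average-length protocol $\Pi'$ of essentially the same cost by compressing Alice's and Bob's messages separately with Lemma~\ref{lem:hjmrcomp}. The feature to exploit is that the simulators produced by the compression are \emph{exact}: the conditional distribution of $(M_A, M_B)$ given $(x,y)$ is unchanged, so the referee's decision rule $\Pi_C$, and hence the error $P_e$ on every input, is preserved, giving $\Pi' \in \T_{av}(f,\epsilon)$. It then remains only to bound $CC_{av}(\Pi')$.

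First I would decouple the information cost. Since Alice's message $M_A$ is a function of $X$, her private randomness, and the Alice--referee shared randomness $R_{AC}$ alone (and symmetrically for Bob), the two terms of $IC(\Pi,\mu)$ depend on disjoint marginals of $\mu$, so
\begin{align*}
IC(\Pi) = IC_A(\Pi) + IC_B(\Pi), \qquad IC_A(\Pi) := \max_{\mu_X} I(X; M_A \mid R_{AC}),
\end{align*}
with $IC_B(\Pi)$ defined symmetrically. I would then view the generation of $M_A$ as a channel $X \to \D_{M_A}$ to be simulated, with $R_{AC}$ playing the role of the free shared randomness available to Alice and the referee in Figure~\ref{fig:channelsim}. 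Applying Lemma~\ref{lem:hjmrcomp} to this channel yields an exact simulator whose worst-case-over-$x$ expected message length is at most $IC_A(\Pi) + g_1(IC_A(\Pi))$, and likewise for Bob. Summing the two contributions and using that $g_1$ is increasing together with $IC_A(\Pi), IC_B(\Pi) \le IC(\Pi)$ gives
\begin{align*}
CC_{av}(\Pi') \le IC(\Pi) + g_1(IC_A(\Pi)) + g_1(IC_B(\Pi)) \le IC(f,\epsilon) + 2 g_1(IC(f,\epsilon)),
\end{align*}
which is the claimed inequality since $CC_{av}(f,\epsilon) \le CC_{av}(\Pi')$.

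The step I expect to be the main obstacle is justifying that the compression of Alice's message is governed by the \emph{conditional} information $I(X; M_A \mid R_{AC})$, averaged over $R_{AC}$, rather than by its worst case over the value $r_{AC}$ of the shared string: a naive application of Lemma~\ref{lem:hjmrcomp} separately for each fixed $r_{AC}$ would replace the average by a maximum and break the bound. Handling this correctly requires casting $R_{AC}$ as genuinely free shared randomness inside the channel-simulation, so that a single invocation of the worst-case-input version of Lemma~\ref{lem:hjmrcomp} controls the expected length uniformly over $x$ while paying only the $R_{AC}$-averaged mutual information. Secondary points to check are that no shared randomness between Alice and Bob is created by the construction (the two simulators draw independent fresh shared strings with the referee, preserving the SMP constraint), and the routine edge cases: when the minimum defining $IC(f,\epsilon)$ is only approached, I would run the argument on protocols with $IC(\Pi) \le IC(f,\epsilon) + \eta$ and let $\eta \to 0$ using continuity of $g_1$.
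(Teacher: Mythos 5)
Your overall plan is the paper's: compress Alice's and Bob's messages separately with the exact channel simulator of Lemma~\ref{lem:hjmrcomp}, observe that exactness preserves the error on every input, and add the two costs. But the step you yourself flag as ``the main obstacle'' is left unresolved, and it is precisely the step that carries the content of the proof. You propose to simulate the channel $X \to \D_{M_A}$ while letting $R_{AC}$ ``play the role of the free shared randomness'' of Figure~\ref{fig:channelsim}. That does not work as stated: the simulator of Lemma~\ref{lem:hjmrcomp} requires its own fresh shared randomness with a specific structure, not the protocol's pre-existing $R_{AC}$ with its arbitrary correlation to $M_A$; and if you instead marginalize $R_{AC}$ out of the channel, the referee only receives a sample of $M_A$ with the correct marginal, whereas $\Pi_C$ takes $R_{AC}$ and $R_{BC}$ as inputs and needs the \emph{joint} distribution of $(R_{AC}, M_A)$ given $x$ to be reproduced. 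Conditioning on each value $r_{AC}$ separately fails for the max-versus-average reason you already identify. So as written there is no valid invocation of Lemma~\ref{lem:hjmrcomp} that pays $\max_X I(X; M_A \mid R_{AC})$.

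The missing idea is to put $R_{AC}$ into the channel \emph{output} rather than into the simulator's randomness: define $\M_A^\prime : x \mapsto$ (the joint conditional distribution of $(R_{AC}, M_A)$ given $X = x$), with output set $R_{AC} \times M_A$, and apply Lemma~\ref{lem:hjmrcomp} once to $\M_A^\prime$ using fresh shared randomness $R_{AC}^\prime$. Since $R_{AC}$ is independent of $X$,
\begin{align*}
C_{\M_A^\prime} = \max_{X \in \D_X} I(X ; R_{AC} M_A) = \max_{X \in \D_X} I(X ; M_A \mid R_{AC}),
\end{align*}
which is exactly the conditional quantity you need, and the referee now reconstructs a register $\widetilde{M}_A$ distributed as the pair $(R_{AC}, M_A)$ given $x$, so he can run $\Pi_C$ unchanged. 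With this modification the rest of your argument (the decomposition $IC = IC_A + IC_B$, monotonicity of $g_1$, and the $\eta \to 0$ limit if the minimum is not attained) goes through and recovers the paper's proof.
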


\begin{proof}
We show that for any protocol $\Pi$ in the shared randomness SMP model, there exists a simulation 
protocol $\Pi^\prime$ in the average length SMP model that exactly simulates $\Pi$, in the sense 
that the referee can compute $\tilde{M}_A$ from $M_A^\prime R_{AC}^\prime$ and $\tilde{M}_B$ 
from $M_B^\prime R_{BC}^\prime$ such that for any $x, y$, $\tilde{M}_A \tilde{M}_B R_C| X=x, Y=y$ 
is distributed exactly as $M_A M_B R_{AC} R_{BC} R_C | X=x, Y=y$, hence he can then 
compute a protocol output $\Pi^\prime (x, y)$ distributed exactly as $\Pi (x, y)$.
Moreover, $CC_{av} (\Pi^\prime) \leq IC(\Pi) + 2 g_1 (IC (\Pi))$, and the result follows.

We define the protocol $\Pi^\prime$ in the following way, viewing $R_{AC} M_A$ as a noisy 
channel $\M_A^\prime$ with input $x \in X$ and output set $R_{AC} \times M_{A}$, and 
similarly for $R_{BC} M_B$ with input $y \in Y$.

Then $R_A^\prime$ is empty and $R_{AC}^\prime$ consists of random strings required for 
the simulator of Lemma~\ref{lem:hjmrcomp}, for the channel $\M_A^\prime$. Alice sends as 
her message $M_A^\prime$ the message required by this simulator, 
and the referee generates a virtual register $\widetilde{M}_A$ from $R_{AC}^\prime M_A^\prime$ 
such that for any $x$, the virtual register on $\widetilde{M}_A$ in $\Pi^\prime$ is distributed 
exactly as the registers on  $R_{AC} M_A$ in $\Pi$. Then, for any $x$ (and for any $y$),
\begin{align*}
\mathbb{E}_{\Pi^\prime (x, y)} [\ell_A (M_A^\prime (x))] & = CC_{av} (\M_A^\prime) \\
				& \leq C_{\M_A^\prime} +   g_1 (C_{\M_A^\prime}).
\end{align*}
Note that $C_{\M_A^\prime} = \max_{X \in \D_X} I(X ; R_{AC} M_A) = \max_{X \in \D_X} I(X ; M_A| R_{AC})$.

We similarly define everything on Bob's side and get $C_{\M_A^\prime} + C_{\M_B^\prime} = IC (\Pi)$. 
Also define $R_C^\prime = R_C$ and 
$\Pi_C^\prime = \Pi_C$, so that indeed for any input $(x, y)$, $\widetilde{M}_A \widetilde{M}_B R_C | (X= x, Y=y)$ 
is distributed as $M_A M_B R_{AC} R_{BC} R_C | (X=x, Y=y)$, hence $\Pi^\prime (x, y)$ is distributed as $ \Pi (x, y)$, and 
\begin{align*}
CC_{av} (\Pi^\prime) & = \max_{(x, y)} \mathbb{E}_{\Pi^\prime (x, y)} [\ell_A (M_A^\prime (x)) + \ell_B (M_B^\prime (y))] \\
		& = CC_{av} (\M_A^\prime) + CC_{av} (\M_B^\prime) \\
		& \leq IC (\Pi) + 2 g_1 (IC (\Pi)),
\end{align*}
and the result follows.
\end{proof}

\section{Proofs of Lemmata relating $IL$ and $IC$}
\label{app:ilvsic}

We restate Lemma~\ref{lem:icvsildist} and~\ref{lem:icvsilworst} here for convenience.

\subsection{Distributional inputs}

\begin{lemma}
For any $\Pi$ and $\mu$, 
\begin{align}
IL(\Pi, \mu) &\leq IC (\Pi, \mu) \\
& = IL(\Pi, \mu) + I(M_A ; M_B| R_{AC} R_{BC})\\
	& \leq 2 IL (\Pi, \mu).
\end{align}
\end{lemma}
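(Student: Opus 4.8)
The plan is to exploit the independence structure of the SMP model, in which the only coupling between Alice's and Bob's messages arises through the correlation of their inputs. Write $W = R_{AC} R_{BC}$ for the shared randomness available to the referee, and recall that $M_A = \Pi_A(X, R_A, R_{AC})$ and $M_B = \Pi_B(Y, R_B, R_{BC})$, where $R_A, R_B, R_{AC}, R_{BC}$ are mutually independent and independent of $(X,Y)$. First I would record the conditional independences this structure forces. Since $R_{BC}$ is independent of $(X, R_A, R_{AC})$, hence of $(X, M_A, R_{AC})$, we have $I(X; M_A \mid R_{AC} R_{BC}) = I(X; M_A \mid R_{AC})$, and symmetrically for Bob; this lets us rewrite $IC(\Pi, \mu) = I(X; M_A \mid W) + I(Y; M_B \mid W)$ with a common conditioning register $W$. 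Next, because $M_A$ is a function of $(X, R_A, R_{AC})$ with $R_A$ independent of $Y$, one gets $I(Y; M_A \mid X W) = 0$, i.e.\ $I(XY; M_A \mid W) = I(X; M_A \mid W)$, and symmetrically $I(XY; M_B \mid W) = I(Y; M_B \mid W)$. Finally, given $(X, Y, W)$ the messages $M_A$ and $M_B$ are functions of the independent coins $R_A$ and $R_B$, so $I(M_A; M_B \mid X Y W) = 0$.

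The middle equality is the heart of the argument, and I would obtain it from the symmetry of the conditional interaction information. Applying the identity $I(XY; M_A \mid W) + I(XY; M_B \mid W) - I(XY; M_A M_B \mid W) = I(M_A; M_B \mid W) - I(M_A; M_B \mid X Y W)$ (both sides equal the three-way co-information $I(XY; M_A; M_B \mid W)$) and substituting the three facts above --- namely $I(XY; M_A \mid W) = I(X; M_A\mid W)$, $I(XY; M_B \mid W) = I(Y; M_B \mid W)$, and $I(M_A; M_B \mid XYW) = 0$ --- collapses the left-hand side to $IC(\Pi,\mu) - IL(\Pi,\mu)$ and the right-hand side to $I(M_A; M_B \mid W)$, giving exactly $IC(\Pi, \mu) = IL(\Pi, \mu) + I(M_A; M_B \mid R_{AC} R_{BC})$.

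With the equality in hand the two inequalities are immediate. The first, $IL(\Pi, \mu) \le IC(\Pi,\mu)$, follows from nonnegativity of $I(M_A; M_B \mid W) \ge 0$. For the upper bound it suffices to show $I(M_A; M_B \mid W) \le IL(\Pi, \mu)$, which I would get from a short monotonicity chain: adjoining $XY$ and then using $I(M_A; M_B \mid X Y W) = 0$ gives $I(M_A; M_B \mid W) \le I(M_A; M_B X Y \mid W) = I(M_A; XY \mid W)$, and a further monotonicity step yields $I(M_A; XY \mid W) \le I(M_A M_B; XY \mid W) = IL(\Pi, \mu)$; combined with the equality this gives $IC(\Pi,\mu) \le 2\, IL(\Pi,\mu)$.

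I expect the main obstacle to be bookkeeping rather than a deep difficulty: the whole argument rests on the conditional-independence relations of the first paragraph, and each must be justified carefully from the primitive independence assumptions of the SMP model, in particular from the crucial fact that Alice and Bob share no common randomness, which is precisely what makes $M_A$ and $M_B$ conditionally independent given $(X,Y,W)$. Lining up the conditioning registers so that the co-information identity applies cleanly, and verifying that dropping versus keeping $R_{BC}$ in the conditioning is harmless, is where I would be most careful.
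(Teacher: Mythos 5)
Your proof is correct and rests on exactly the same ingredients as the paper's: the conditional independences $I(Y;M_A\mid X R_{AC}R_{BC})=0$, $I(M_A;M_B\mid XY R_{AC}R_{BC})=0$, and the harmlessness of adding $R_{BC}$ (resp.\ $R_{AC}$) to the conditioning, combined with chain-rule manipulations; the paper simply writes out the two chain-rule expansions term by term where you package them as the symmetry of the conditional co-information $I(XY;M_A;M_B\mid W)$, and its final bound $I(M_A;M_B\mid W)\leq I(Y;M_B\mid W)\leq IL(\Pi,\mu)$ is the mirror image of your $I(M_A;M_B\mid W)\leq I(XY;M_A\mid W)\leq IL(\Pi,\mu)$. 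This is a presentational difference only, so I would count it as essentially the same proof.
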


\begin{proof}
We first show that $IC(\Pi, \mu) = IL(\Pi, \mu) + I(M_A ; M_B | R_{AC} R_{BC})$ by the following chain of inequality:
\begin{align*}
IL (\Pi, \mu) = & I(XY; M_A M_B | R_{AC} R_{BC}) \\
			  = & I(XY ; M_A | R_{AC} R_{BC}) + I(XY ; M_B | M_A R_{AC} R_{BC}) \\
			 = & I(X; M_A | R_{AC} R_{BC}) + I (Y ; M_A | X R_{AC} R_{BC}) \\
				& +  I(Y ; M_B | M_A R_{AC} R_{BC}) + I (X ; M_B | Y M_A R_{AC} R_{BC}) \\
			 = & I(X ; M_A | R_{AC}) + I (Y M_A ; M_B | R_{AC} R_{BC}) - I (M_A ; M_B | R_{AC} R_{BC}) \\
			 = & I(X ; M_A | R_{AC}) + I (Y  ; M_B | R_{AC} R_{BC}) \\
				& + I(M_A ; M_B | Y R_{AC} R_{BC}) - I (M_A ; M_B | R_{AC} R_{BC}) \\
			 = & I(X ; M_A | R_{AC}) + I (Y  ; M_B | R_{BC})  - I (M_A ; M_B | R_{AC} R_{BC}) \\
			 = & IC (\Pi, \mu) - I (M_A ; M_B | R_{AC} R_{BC}).
\end{align*}
Along with the chain rule, we made use of the fact that $R_{BC}$ is independent of $X   M_A R_{AC}$ 
and $R_{AC}$ of $Y   M_B  R_{BC}$, as well as the fact that the following are short Markov 
chains: $Y \leftrightarrow X R_{AC} R_{BC} \leftrightarrow M_A$, 
$X \leftrightarrow Y M_A R_{AC} R_{BC} \leftrightarrow M_B$,  and
$M_A \leftrightarrow  Y R_{AC} R_{BC} \leftrightarrow M_B$.

The remaining inequalities follow by non-negativity of mutual information and by the following chain of inequality:
\begin{align*}
I(M_A ; M_B | R_{AC} R_{BC}) & = I (Y M_A ; M_B | R_{AC } R_{BC}) - I(Y ; M_B | M_A R_{AC} R_{BC}) \\
				& \leq I(Y ; M_B | R_{AC} R_{BC}) + I (M_A ; M_B | Y R_{AC} R_{BC}) \\
				& = I (Y ; M_B | R_{AC} R_{BC}) \\
				& \leq I(XY; M_A M_B | R_{AC} R_{BC}) \\
				&  = IL (\Pi, \mu).
\end{align*}
\end{proof}

\subsection{Worst-case inputs}

\begin{lemma}
For any $\Pi$, $f$ and $\epsilon \in [0, \frac{1}{2})$,
\begin{align}
IC (\Pi) &= IL (\Pi), \\
IC (f, \epsilon) &= IL(f, \epsilon).
\end{align}
\end{lemma}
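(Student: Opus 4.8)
The plan is to deduce both equalities from the distributional statement of the preceding lemma, exploiting the fact that the information complexity of a fixed protocol on an input distribution $\mu$ depends only on the marginals of $\mu$. First I would record the trivial direction: since $IL(\Pi,\mu) \le IC(\Pi,\mu)$ holds for every $\mu$, taking the maximum over $\mu \in \D_{XY}$ on both sides gives $IL(\Pi) \le IC(\Pi)$ at once.

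For the reverse inequality, the key observation is that the first summand $I(X;M_A|R_{AC})$ of $IC(\Pi,\mu)$ is a function of the marginal $\mu_X$ alone: $M_A$ is produced by Alice from $X$, $R_A$, $R_{AC}$ only, and $R_A, R_{AC}$ are independent of the inputs, so the joint law of $(X, M_A, R_{AC})$ is determined by $\mu_X$ together with the protocol. Symmetrically, $I(Y;M_B|R_{BC})$ depends only on $\mu_Y$. Hence $IC(\Pi,\mu)$ depends on $\mu$ only through the pair $(\mu_X,\mu_Y)$, and the maximization decouples: if $\mu_X^\ast$ and $\mu_Y^\ast$ are marginals achieving the two suprema separately, then the product distribution $\mu^\ast = \mu_X^\ast \times \mu_Y^\ast$ attains $IC(\Pi) = \max_{\mu} IC(\Pi,\mu)$.

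Next I would evaluate the correction term supplied by the previous lemma at $\mu^\ast$. Under a product input distribution the inputs $X$ and $Y$ are independent, and together with the mutual independence of $R_A, R_B, R_{AC}, R_{BC}$ and their independence from the inputs, this renders $M_A$ (a function of $X, R_A, R_{AC}$) conditionally independent of $M_B$ (a function of $Y, R_B, R_{BC}$) given $R_{AC} R_{BC}$; thus $I(M_A;M_B|R_{AC}R_{BC}) = 0$ at $\mu^\ast$. The distributional lemma then yields $IC(\Pi,\mu^\ast) = IL(\Pi,\mu^\ast)$, and chaining the inequalities
\begin{align*}
IC(\Pi) = IC(\Pi,\mu^\ast) = IL(\Pi,\mu^\ast) \le \max_{\mu} IL(\Pi,\mu) = IL(\Pi) \le IC(\Pi)
\end{align*}
forces equality throughout, proving $IC(\Pi) = IL(\Pi)$. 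The function-level identity then follows formally by minimizing the now-common quantity over $\Pi \in \T_{sh}(f,\epsilon)$, giving $IC(f,\epsilon) = \min_{\Pi} IC(\Pi) = \min_{\Pi} IL(\Pi) = IL(f,\epsilon)$.

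The main obstacle is the middle step: justifying that the worst-case input distribution for $IC$ may be taken to be a product. Everything hinges on verifying cleanly that each marginal mutual information in $IC(\Pi,\mu)$ is insensitive to the correlations present in $\mu$, so that no correlated joint distribution can outperform the best product one. Once that is established, the vanishing of $I(M_A;M_B|R_{AC}R_{BC})$ on product distributions and the sandwich above close the argument with no further estimates required.
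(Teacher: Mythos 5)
Your proposal is correct and follows essentially the same route as the paper: both arguments rest on the observation that the two summands of $IC(\Pi,\mu)$ depend only on the marginals $\mu_X$ and $\mu_Y$ respectively, so the maximization decouples and is attained on a product distribution, where $I(M_A;M_B|R_{AC}R_{BC})=0$ and hence $IC$ and $IL$ coincide. The paper phrases this by lower-bounding $IL(\Pi)$ over product distributions and expanding $I(XY;M_AM_B|R_{AC}R_{BC})$ there, while you invoke the preceding distributional lemma at the optimizing product distribution; these are the same argument.
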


\begin{proof}
On one side, $\max_{\mu \in \D_{XY}} IL (\Pi, \mu) \leq \max_{\mu \in \D_{XY}} IC (\Pi, \mu)$ 
follows by Lemma~\ref{lem:icvsildist}. On the other side, by restricting the maximization over $\D_{XY}$ 
only to product distributions $\mu_X \otimes \mu_Y$, we get
\begin{align*}
IL (\Pi) & =\max_{\mu \in \D_{XY}} IL (\Pi, \mu) \\
			& \geq \max_{\mu_X \otimes \mu_Y} IL (\Pi, \mu) \\
			& = \max_{\mu_X \otimes \mu_Y} I (XY; M_A M_B |R_{AC} R_{BC}) \\
			& = \max_{\mu_X \otimes \mu_Y} (I (X; M_A  |R_{AC} ) + I (Y ; M_B | R_{BC})) \\
			& = \max_{\mu_X} I (X; M_A  |R_{AC} )  + \max_{ \mu_Y}  I (Y ; M_B | R_{BC}) \\
			& = \max_{\mu \in \D_{XY}} I (X; M_A  |R_{AC} )  + \max_{ \mu \in \D_{XY}}  I (Y ; M_B | R_{BC}) \\
			& \geq \max_{\mu \in \D_{XY}} ( I (X; M_A  |R_{AC} )  + I (Y ; M_B | R_{BC})) \\
				& = IC (\Pi),
\end{align*}
in which we use the fact that $X M_A R_{AC}$ and $Y M_B R_{BC}$ are independent if $X$ and $Y$ are.
The results follow.
\end{proof}

\section{Proofs of Lemmata relating $IC$ and $CC_{priv}$}
\label{app:icvscc}

For convenience, we restate the lemmata before their proofs.

\subsection{Link between $CC_{priv}$ and $CC_{sh}$ (Lemma~\ref{lem:ccprivvssh})}

\begin{lemma}
For any Boolean function $f$, any $\epsilon \geq 0$, and any $\delta > 0$ satisfying $\epsilon + \delta < \frac{1}{2}$, denote $n_A = \log |X|$ and $n_B = \log |Y|$. Then it holds that
\begin{align*}
CC_{priv} (f, \epsilon + \delta) \leq CC_{sh} (f, \epsilon) +  g_2 (n_A, n_B, \delta),
\end{align*}
with $g_2 (x, y, z) = 2 \log (\frac{2 ( x + y)}{z^2 \cdot \log e } +1) + 2$.
\end{lemma}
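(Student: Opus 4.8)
The plan is to prove this Newman-type derandomization by a standard sampling-and-union-bound argument, carefully adapted to the SMP structure so that the additive overhead stays logarithmic rather than depending on the number of inputs. Starting from an optimal shared-randomness protocol $\Pi \in \T_{sh}(f, \epsilon)$ achieving $CC_{sh}(\Pi) = CC_{sh}(f, \epsilon)$, I would first absorb the referee's private randomness $R_C$ and the per-player private coins $R_A, R_B$ into the shared randomness $R_{AC} R_{BC}$ without loss of generality, so that the protocol is a deterministic function of $(x, r_{AC})$ for Alice and $(y, r_{BC})$ for Bob. The key observation is that because there is no shared randomness \emph{between} Alice and Bob, the relevant shared randomness factorizes: Alice's message depends only on $r_{AC}$ and Bob's only on $r_{BC}$. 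This lets me derandomize the two shared-randomness registers \emph{independently}, which is exactly what keeps the overhead additive in $\log n_A$ and $\log n_B$ separately rather than multiplicative.

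The main step is the sampling argument. For a fixed input $(x,y)$, the error $P_e(\Pi, (x,y))$ is an expectation over $(r_{AC}, r_{BC})$ of the indicator that the referee errs. I would draw $t$ independent samples $r_{AC}^{(1)}, \dots, r_{AC}^{(t)}$ and $t$ independent samples $r_{BC}^{(1)}, \dots, r_{BC}^{(t)}$, and define a new protocol that uses only $2 \log t$ bits of shared randomness to pick a common index $i \in \{1, \dots, t\}$ (this index is the only shared coin remaining, and it can then be pushed into private randomness using that an index is cheap to transmit, or handled directly). By Hoeffding's inequality, for each fixed $(x,y)$ the empirical average error over the sampled coins deviates from the true error by more than $\delta$ with probability at most $2 \exp(-2 \delta^2 t)$ over the choice of the sample. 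A union bound over all $|X| \cdot |Y| = \exp_2(n_A + n_B)$ input pairs shows that if $2 \exp_2(n_A + n_B) \cdot \exp(-2\delta^2 t) < 1$, there exists a single good sample for which the new protocol errs by at most $\epsilon + \delta$ on every input simultaneously. Solving this for $t$ gives $t = O\!\left(\frac{n_A + n_B}{\delta^2 \log e}\right)$, so the index costs $2 \log t$ bits, and tracking the constants carefully yields exactly the claimed $g_2(n_A, n_B, \delta) = 2 \log\!\left(\frac{2(n_A + n_B)}{\delta^2 \cdot \log e} + 1\right) + 2$ additive overhead in $CC_{priv}$.

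The one subtlety I would handle with care is the conversion from the surviving shared index to genuinely private randomness, which is what distinguishes $CC_{sh}$ from $CC_{priv}$: once the sample is fixed and good, each player holds a \emph{deterministic} list of $t$ candidate message functions, and the only residual shared resource is the common index $i$. Since Alice's message and Bob's message are now deterministic given $i$, and the referee needs to know $i$ to decode consistently, I would have Alice (or each player) generate $i$ using private randomness and append its $\lceil \log t \rceil$-bit description to the message; the referee then reads the index directly rather than sharing it. This is precisely where the $2\log t$ additive term lands in the private-coin communication cost. The main obstacle to watch is bookkeeping the constants so that the Hoeffding exponent, the union bound threshold, and the ceiling functions in the communication cost combine to give exactly $g_2$ rather than a looser bound; getting the factor of $2$ inside the logarithm and the $+2$ outside it right requires choosing $t$ to be the smallest integer satisfying the union-bound inequality and then bounding $\lceil \log t \rceil \leq \log t + 1$ for each of the two sampled registers.
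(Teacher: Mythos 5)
Your overall strategy---sample $t$ copies of each player--referee shared string, apply Hoeffding plus a union bound over all $2^{n_A+n_B}$ input pairs to find one good fixed sample, and then have each player transmit a privately chosen $\lceil \log t\rceil$-bit index as a prefix to their message---is exactly the paper's Newman-type argument. However, there is a genuine gap in your concentration step. Since Alice and Bob are forbidden from sharing randomness, they cannot select a ``common index'' $i$; each must draw an index privately, so the derandomized protocol's error on a fixed $(x,y)$ is the doubly-indexed average $\frac{1}{t^2}\sum_{i,j} E\big(x,y,r_{AC}^{(i)},r_{BC}^{(j)}\big)$. Its $t^2$ summands are not independent (terms sharing an index share a sampled string), so the single application of Hoeffding with deviation $\delta$ and tail $2\exp(-2\delta^2 t)$ that you invoke does not apply to this quantity; and if you instead insist on a genuinely common index, you have reintroduced Alice--Bob shared randomness, which the model excludes (and which would trivialize equality). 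The paper circumvents this by derandomizing \emph{sequentially}: first fix $r_{AC}^1,\dots,r_{AC}^t$ while $r_{BC}$ and all private coins remain fully random, apply Hoeffding to the $t$ independent conditional errors $E(x,y,r_{AC}^i)$ and lose $\delta$ in error; then repeat the argument on Bob's side inside the resulting protocol and lose another $\delta$. The total error increase is $2\delta$, and the stated bound follows after the substitution $\delta \mapsto \delta/2$.

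This also undermines your claim that careful constant-tracking ``yields exactly'' $g_2$. A single-shot Hoeffding at deviation $\delta$ would need only $t \approx \frac{n_A+n_B}{2\delta^2 \log e}$, giving an overhead near $2\log\!\left(\frac{n_A+n_B}{2\delta^2\log e}+1\right)+2$, which is about $4$ bits \emph{smaller} than $g_2(n_A,n_B,\delta)=2\log\!\left(\frac{2(n_A+n_B)}{\delta^2\log e}+1\right)+2$; the factor of $2$ in the numerator (rather than $\tfrac{1}{2}$) is precisely the price of running the two-stage derandomization with deviation $\delta/2$ per stage. Once you replace your single concentration step by the sequential one, the rest of your argument---the union bound over inputs, the choice of $t$, absorbing $R_A$ and $R_B$ into the shared strings, and pushing the surviving indices into the messages as private-coin prefixes---goes through and coincides with the paper's proof.
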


\begin{proof}
We use Hoeffding's inequality: Let $x_1, \cdots, x_t$ be i.i.d random variables in $[0, 1]$, 
and denote the empirical mean $\frac{1}{t}\sum_{i=1}^t x_i = \bar{x}$. 
Then 
\begin{align*}
\Pr [\bar{x} - \mathbb{E} [\bar{x}] \geq \delta] & \leq \exp{(-2 \delta^2 t)} \\
							& = \exp_2 (-2 \delta^2 t \cdot  \log e ).
\end{align*}

Given $\delta > 0$ and $\delta_1 > 0$, we take $t = \lceil \frac{n_A + n_B}{2 \delta^2 \cdot \log e } \rceil + \delta_1$.

Given a protocol $\Pi$ in the shared randomness SMP model, fix inputs $(x, y)$ and random 
strings $r_{ABC} = (r_A, r_B, r_C, r_{AC}, r_{BC})$ used for one run of the 
protocol, and let $\Pi (x, y, r_{ABC})$ be the output of the protocol when run 
on input $x, y$ and random strings $r_{ABC}$.
Let $E (x, y, r_{ABC}) = 1$ if $\Pi (x, y, r_{ABC}) \not= f(x, y)$, and $0$ otherwise. 
For all $x \in X$ and $ y \in Y $, it holds that Pr$_{r_{ABC}} [ E(x, y, r_{ABC}) = 1] \leq \epsilon$,
since $\Pi$ makes error at most $\epsilon$. 
We now start by partially derandomizing the protocol on Alice's side. Fix $t$ random 
strings $r_{AC}^1, \cdots r_{AC}^t \in R_{AC}$, and denote by $E (x, y, r_{AC}^i)$ 
the average of $E (x, y, r_{ABC})$ over $r_A, r_B, r_C, r_{BC}$ obtained
while fixing $r_{AC} = r_{AC}^i$. Then also $\mathbb{E}_{r_{AC}^i \sim R_{AC}} [ E(x, y, r_{AC}^i) ] \leq \epsilon$ 
for any $i$. Denoting $\bar{E} (x, y, r_{AC}^i) = \frac{1}{t}\sum_{i = 1}^{t} E (x, y, r_{AC}^i)$, we get by 
Hoeffding's inequality that 
Pr$_{r_{AC}^1, \cdots r_{AC}^t}[\bar{E} (x, y, r_{AC}^i) - \epsilon \geq  \delta] \leq \exp{(-2 \delta^2 t)}$. 
With the above choice for $t$, we get that for any pair $(x, y)$ of 
input,  Pr$_{r_{AC}^1, \cdots r_{AC}^t}[\bar{E} (x, y, r_{AC}^i) \geq (\epsilon + \delta)] < \exp_2 (- (n_A + n_B))$. 
By the union bound, there exists a choice of the $r_{AC}^i$ such that $ \bar{E} (x, y, r_{AC}^i)   < (\epsilon + \delta)$ 
for all $x$ and $ y$. Let $\widetilde{\Pi}$ be the protocol in which Alice and the referee agree beforehand 
on the set $\{r_{AC}^i \}_{i = 1}^t$ and they pick $i \in [t]$ 
uniformly at random before running $\Pi$ with $r_{AC} = r_{AC}^i$. Then $\widetilde{\Pi}$ 
has worst-case error at most $\epsilon + \delta$. 
Starting with protocol $\widetilde{\Pi}$, we can do similarly on Bob's side, and partially derandomize $r_{BC}$ by finding 
a set $\{r_{BC}^j \}_{j=1}^t$ of size $t$ such that the error of a protocol $\widetilde{\Pi}^\prime$ in which Bob 
and the referee pick $j \in [t]$ 
uniformly at random before running the protocol $\widetilde{\Pi}$, with $r_{AC}$ also partially derandomized, has error at 
most $\epsilon + 2 \delta$. 
In order to obtain a private coin protocol $\Pi^\prime$, we instead have Alice and Bob pick locally $i$ and $j$ uniformly at random before transmitting 
it to the referee as a prefix to their message, so that the referee is also aware of the choice of $r_{AC}^i$ 
and $r_{BC}^i$, and they can then run  $\widetilde{\Pi}^\prime$. 
It then holds that
\begin{align*}
CC_{priv} (\Pi^\prime) & \leq CC_{sh} (\Pi) + 2 \lceil \log t \rceil \\
	& \leq CC_{sh} (\Pi) + 2 \log (\frac{n_A + n_B}{ 2 \delta^2 \cdot \log e } + 1 + \delta_1) + 2.
\end{align*}
We get the desired  result by taking $\delta_1$ to zero and by considering $\delta^\prime = 2 \delta$.
\end{proof}

\subsection{Link between $CC_{sh}$ and $CC_{av}$ (Lemma~\ref{lem:ccshvsav})}

\begin{lemma}
For any Boolean function $f$, any $\epsilon \geq 0$, and any $\delta > 0$ satisfying $\epsilon + \delta < \frac{1}{2}$, 
\begin{align*}
CC_{sh} (f, \epsilon + \delta) \leq (1 / \delta) \cdot CC_{av} (f, \epsilon) + 4.
\end{align*}
\end{lemma}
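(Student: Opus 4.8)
The plan is to convert an average-length protocol into a bounded-length (shared-randomness) protocol by truncating messages that are too long and declaring an error whenever truncation occurs. Markov's inequality controls how often this happens, and the factor $1/\delta$ in the statement is exactly the price of making the truncation-failure probability at most $\delta$.

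\paragraph{Setup and the truncation idea} Let $\Pi$ be a protocol in the average length SMP model computing $f$ with error at most $\epsilon$ and with $CC_{av}(\Pi) = C$; such a $\Pi$ exists with $C$ arbitrarily close to $CC_{av}(f,\epsilon)$. Recall that $CC_{av}(\Pi) = \max_{(x,y)} \mathbb{E}_{\Pi(x,y)}[\ell_A(M_A(x)) + \ell_B(M_B(y))]$, so for every fixed input $(x,y)$ the expected total message length is at most $C$. I would define a new protocol $\Pi'$ as follows: Alice and Bob run $\Pi$ (using shared randomness with the referee) to generate their messages, but each party transmits at most $L := C/\delta$ bits; if the encoded message $M_A(x)$ (respectively $M_B(y)$) would exceed its allotted length, the party sends a special abort symbol instead, and on seeing any abort symbol the referee outputs a fixed default value (counted as an error).

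\paragraph{Markov's inequality step} The core estimate is a union bound over the two truncation events controlled by Markov. Fix $(x,y)$ and let $L_A = \ell_A(M_A(x))$, $L_B = \ell_B(M_B(y))$, so $\mathbb{E}[L_A + L_B] \le C$. By Markov's inequality, the probability that the total length $L_A + L_B$ exceeds $L = C/\delta$ is at most $C/L = \delta$. Truncation is only invoked when the total exceeds $L$ (I would budget the lengths so that no abort is needed whenever $L_A + L_B \le L$, e.g.\ by allocating a shared length budget decoded from the shared randomness, or simply by bounding each message by $L$ and noting $\{L_A > L \text{ or } L_B > L\} \subseteq \{L_A + L_B > L\}$). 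Hence for every $(x,y)$ the probability that $\Pi'$ aborts is at most $\delta$, and conditioned on no abort the referee's output is distributed exactly as in $\Pi$. Therefore the worst-case error of $\Pi'$ is at most $\epsilon + \delta$, so $\Pi' \in \T_{sh}(f, \epsilon+\delta)$.

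\paragraph{Bounding the communication and closing} Each message of $\Pi'$ now has length at most $\lceil L \rceil$ plus a constant number of bits to encode the abort symbol and the end-of-message marker; with the uniform length functions $\ell_A = \lceil \log|M_A|\rceil$ used to link $CC_{sh}$ with $CC_{av}$, this gives
\begin{align*}
CC_{sh}(\Pi') \le 2\lceil C/\delta \rceil + O(1) \le (1/\delta)\cdot C + 4,
\end{align*}
where the additive constant $4$ absorbs the two ceilings and the abort/marker overhead, and the factor $1/\delta$ multiplies only $C$ since both messages share the single length budget $C/\delta$. Taking $C \to CC_{av}(f,\epsilon)$ yields $CC_{sh}(f,\epsilon+\delta) \le (1/\delta)\cdot CC_{av}(f,\epsilon) + 4$. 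I expect the main obstacle to be bookkeeping rather than conceptual: pinning down exactly how the length budget is split between Alice and Bob so that a single application of Markov on the \emph{total} length (rather than two separate applications, which would cost a factor $2/\delta$) suffices, and verifying that the constant overhead genuinely fits inside the additive $4$ after accounting for the ceilings in the definition of $CC_{sh}$.
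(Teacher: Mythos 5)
Your truncation-plus-Markov strategy is the right general idea, but the communication accounting has a genuine gap. In the SMP model Alice and Bob cannot coordinate: each must decide unilaterally whether her own message fits her own cap. If, as you propose, each party is allotted the same budget $L = C/\delta$, then the worst-case cost of the new protocol is $\lceil\log|M'_A|\rceil + \lceil\log|M'_B|\rceil \approx 2C/\delta$, not $C/\delta$; your final display $2\lceil C/\delta\rceil + O(1) \le (1/\delta)\cdot C + 4$ is false, since $2\lceil C/\delta\rceil \ge 2C/\delta$. The phrase ``both messages share the single length budget'' does not correspond to any implementable protocol: a shared budget would require the realized lengths $L_A, L_B$ to be jointly known before transmission, and the pairwise shared randomness (with the referee only) cannot provide that. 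So as written you prove $CC_{sh}(f,\epsilon+\delta) \le (2/\delta)\,CC_{av}(f,\epsilon) + O(1)$, a factor of $2$ short of the claim.

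The paper closes this gap with two moves you are missing. First, the caps are input-dependent and individual: since $M_A(x)$ depends only on $x$ (and Alice's randomness), $c_A(x) := \mathbb{E}[\ell_A(M_A(x))]$ is computable by Alice, and she truncates at $c_A(x)/\delta$; Bob truncates at $c_B(y)/\delta$. These caps sum to at most $\max_{(x,y)}(c_A(x)+c_B(y))/\delta = CC_{av}(\Pi)/\delta$, which is where the clean $1/\delta$ factor comes from. The price is that Markov must now be applied separately to each party, so the abort probability is only bounded by $2\delta$, not $\delta$. Second, to recover the stated error $\epsilon+\delta$ rather than $\epsilon+2\delta$, the referee outputs a \emph{uniformly random} answer on abort rather than your fixed default: a random answer errs with probability $\frac{1}{2}$, so the abort event contributes at most $2\delta\cdot\frac{1}{2} = \delta$ to the error. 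Both ingredients are needed to get the constants in the lemma; with them your argument goes through.
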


\begin{proof}
We make use of Markov's inequality, stating, for a non-negative random variable $X$ and $a>0$, that 
\begin{align*}
\Pr [X \geq a] \leq \frac{\mathbb{E (X)}}{a}.
\end{align*}

Given protocol $\Pi$ in the average length SMP model computing $f$ with error $\epsilon$ and satisfying $CC_{av} (\Pi) = CC_{av} (f, \epsilon)$, 
we will construct protocol $\Pi^\prime$ in the shared randomness SMP model with $CC_{sh} (\Pi^\prime) \leq \frac{1}{\delta} CC_{av} (\Pi) + 2$ and 
simulating $\Pi$ exactly on short enough messages, and aborting on too long messages.

Let  $c_A (x, y) = \bE_{\Pi (x, y)} [\ell_A (M_A (x))]$, $c_B (x, y) = \bE_{\Pi (x, y)} [\ell_B (M_B (y))]$ 
and $c = \max_{(x, y)} (c_A (x, y) + c_B (x, y)) = CC_{av} (\Pi)$. Fix $\delta > 0$, then, for any $x, y $, 
Markov's inequality implies that Pr$[\ell_A (M_A (x)) \geq \frac{c_A}{\delta}] \leq \delta$ 
and Pr$[\ell_B (M_B (y)) \geq \frac{c_B}{\delta}] \leq \delta$. Define $\Pi^\prime$ on Alice's side
as running $\Pi$ and sending the message $m_A$ if it is of length at most $\frac{c_A}{\delta}$, 
and sending an abort flag otherwise, so 
that the referee can output a random answer in such a case.
Define $\Pi^\prime$ similarly on Bob's side, depending whether $m_B$  is of length at most $\frac{c_B}{\delta}$.
It holds 
that $CC_{sh} (\Pi^\prime) \leq \frac{1}{\delta} CC_{av} (\Pi) + 4$. 
By the union bound, the probability that at least one of  Alice or Bob sends an error flag is at most $2 \delta$, and then
the referee's random output is wrong with probability $1/2$ on any $(x, y)$, so that the probability of error of $\Pi^\prime$ for computing $f$ is at most $\epsilon + \delta$, completing the proof.
\end{proof}

\subsection{Link between $IL$ and $CC_{priv}$ (Lemma~\ref{lem:ilvsccpriv})}

\begin{lemma}
For any Boolean function $f$, any $\epsilon \geq 0$, any $\delta_1 > 0$, and  any $\delta_2 > 0$ satisfying $\epsilon + \delta_1 + \delta_2 < \frac{1}{2}$, denote $n_A = \log |X|$ and $n_B = \log |Y|$. Then it holds that 
\begin{align*}
IL (f, \epsilon) \geq  \delta_1 \big( CC_{priv} (f, \epsilon + \delta_1 +\delta_2) - g_2 (n_A, n_B, \delta_2) - 4  \big) - 2 g_1 (CC_{priv} (f, \epsilon)),
\end{align*}
with $g_1 (x) = 2 \log (x + 1) + 10$, and
$g_2 (x, y, z) = 2 \log (\frac{2 (x + y)}{z^2 \cdot \log e } +1) + 2  $.
\end{lemma}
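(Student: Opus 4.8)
The plan is to prove this lemma purely by composing the four preceding results---Lemmata~\ref{lem:ccprivvssh},~\ref{lem:ccshvsav}, and~\ref{lem:icvsccav}, together with the worst-case identity $IC(f,\epsilon)=IL(f,\epsilon)$ of Lemma~\ref{lem:icvsilworst}---into a single chain running from $CC_{priv}$ down to $IL$. No new combinatorial or information-theoretic argument is needed; the entire content lies in tracking the error parameters correctly and in one final monotonicity substitution.

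First I would instantiate Lemma~\ref{lem:ccprivvssh} with base error $\epsilon+\delta_1$ and increment $\delta_2$, giving
\begin{align*}
CC_{priv}(f,\epsilon+\delta_1+\delta_2) \leq CC_{sh}(f,\epsilon+\delta_1) + g_2(n_A,n_B,\delta_2).
\end{align*}
Next I would apply Lemma~\ref{lem:ccshvsav} with base error $\epsilon$ and increment $\delta_1$, obtaining $CC_{sh}(f,\epsilon+\delta_1) \leq (1/\delta_1)\,CC_{av}(f,\epsilon) + 4$. Then Lemma~\ref{lem:icvsccav} together with $IC(f,\epsilon)=IL(f,\epsilon)$ yields $CC_{av}(f,\epsilon) \leq IL(f,\epsilon) + 2g_1(IL(f,\epsilon))$. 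The hypothesis $\epsilon+\delta_1+\delta_2<\frac{1}{2}$ guarantees that each of the three invocations meets its own admissibility constraint on the error. Substituting these in order gives
\begin{align*}
CC_{priv}(f,\epsilon+\delta_1+\delta_2) \leq \frac{1}{\delta_1}\big(IL(f,\epsilon)+2g_1(IL(f,\epsilon))\big) + 4 + g_2(n_A,n_B,\delta_2).
\end{align*}

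Finally I would multiply through by $\delta_1$ and solve for $IL(f,\epsilon)$, landing at
\begin{align*}
IL(f,\epsilon) \geq \delta_1\big(CC_{priv}(f,\epsilon+\delta_1+\delta_2) - g_2(n_A,n_B,\delta_2) - 4\big) - 2g_1(IL(f,\epsilon)).
\end{align*}
The only step that is not pure algebra is converting the residual $g_1(IL(f,\epsilon))$ into the $g_1(CC_{priv}(f,\epsilon))$ appearing in the statement: here I would invoke the complexity chain $IL(f,\epsilon)=IC(f,\epsilon)\leq CC_{av}(f,\epsilon)\leq CC_{sh}(f,\epsilon)\leq CC_{priv}(f,\epsilon)$ (all at the common error $\epsilon$) and the fact that $g_1(x)=2\log(x+1)+10$ is monotonically increasing, so that $g_1(IL(f,\epsilon))\leq g_1(CC_{priv}(f,\epsilon))$. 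This weakening preserves the direction of the inequality and produces the claimed bound. The main---and genuinely only---obstacle is the bookkeeping: one must assign $\delta_2$ to the $CC_{priv}\to CC_{sh}$ reduction and $\delta_1$ to the $CC_{sh}\to CC_{av}$ reduction (not the reverse), since it is $\delta_1$ that reappears both as the multiplicative prefactor and as the denominator coming from Markov's inequality, so swapping the pairing would misplace the $1/\delta$ factor.
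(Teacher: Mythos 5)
Your proposal is correct and follows essentially the same route as the paper: the paper writes the identical chain of reductions (Lemmata~\ref{lem:icvsilworst}, \ref{lem:icvsccav}, \ref{lem:ccshvsav}, \ref{lem:ccprivvssh}, with the same assignment of $\delta_1$ to the Markov step and $\delta_2$ to the Newman step) as a sequence of lower bounds starting from $IL(f,\epsilon)=IC(f,\epsilon)$, whereas you compose the upper bounds and invert at the end. The final monotonicity substitution $g_1(IL(f,\epsilon))\leq g_1(CC_{priv}(f,\epsilon))$ is also exactly how the paper arrives at the stated error term.
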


\begin{proof}
The lemma follows by combining Lemmata~\ref{lem:icvsccav},~\ref{lem:icvsilworst},~\ref{lem:ccprivvssh} and~\ref{lem:ccshvsav} through the following chain of inequality:
\begin{align*}
IL (f, \epsilon) = & IC(f, \epsilon) \\
		\geq & CC_{av} (f, \epsilon) - 2 g_1 (CC_{priv} (f, \epsilon)) \\
		\geq & \delta_1 \big( CC_{sh} (f, \epsilon + \delta_1) - 4 \big) - 2 g_1 (CC_{priv} (f, \epsilon)) \\
		\geq & \delta_1 \big( CC_{priv} (f, \epsilon + \delta_1 + \delta_2) -  g_2 (n_A, n_B, \delta_2) - 4\big) - 2 g_1 (CC_{priv} (f, \epsilon)).
\end{align*}
\end{proof}

\subsection{Lower bound on $CC_{priv} (EQ_n, \epsilon)$ (Lemma~\ref{lem:bbkim})}

\begin{lemma}
For any $n \in \mathbb{N}$ and any $\epsilon \in [0, \frac{1}{2})$, the following holds:
\begin{align*}
CC_{priv} (EQ_n, \epsilon) \geq 2 \sqrt{g_3 (\epsilon)} \sqrt{n} - g_3 (\epsilon) - 6,
\end{align*}
with $g_3 (x) = 2 \cdot (1/2 - x)^2 \cdot \log e $.
\end{lemma}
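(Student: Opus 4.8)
The plan is to reduce the private-coin SMP protocol to a deterministic one and then invoke the deterministic lower bound $D^{||}(EQ_n) = 2n$: in deterministic SMP both players' messages must be injective on their inputs, since two inputs $x \neq x'$ giving Alice the same message cannot be separated by the referee when Bob holds $x$, and symmetrically for Bob, so each message carries at least $n$ bits. Write $a = \lceil \log |M_A| \rceil$ and $b = \lceil \log |M_B| \rceil$, so $CC_{priv}(\Pi) = a + b$, and absorb the referee's randomness into an acceptance kernel $R(m_A, m_B) \in [0,1]$, so that on input $(x,y)$ the acceptance probability is the bilinear form $\mathrm{acc}(x,y) = \langle p_x, R q_y \rangle$, where $p_x$ and $q_y$ are Alice's and Bob's message distributions. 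Correctness at error $\epsilon$ means $\mathrm{acc}(x,x) \geq 1 - \epsilon$ and $\mathrm{acc}(x,y) \leq \epsilon$ for $x \neq y$, a gap of $1 - 2\epsilon$ around the threshold $\tfrac12$.

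The key step is a sparsification by sampling. For a fixed $x$, draw $t$ messages i.i.d.\ from $p_x$ and let $\tilde p_x$ be the uniform distribution on them. For each fixed Bob message $m_B$, the value $\langle \tilde p_x, R(\cdot, m_B)\rangle$ is an average of $t$ i.i.d.\ variables in $[0,1]$ with mean $\langle p_x, R(\cdot, m_B)\rangle$, so by Hoeffding's inequality it deviates by more than $\eta$ with probability at most $2\exp(-2 t \eta^2) = 2\exp_2(-2 t \eta^2 \log e)$. A union bound over all $|M_B| \leq 2^{b}$ functionals $R(\cdot, m_B)$ shows that, as soon as $2 t \eta^2 \log e$ exceeds $b$ by a constant, there exists a choice of $t$ messages for which $|\langle \tilde p_x - p_x, R(\cdot, m_B)\rangle| \leq \eta$ holds simultaneously for every $m_B$; this fixes a good sparse representative of Alice deterministically. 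The symmetric construction on Bob's side (a union bound over the $2^{a}$ functionals $R(m_A, \cdot)$) produces a sparse $\tilde q_y$. The exponent $2\eta^2 \log e$ is exactly where $g_3(\epsilon) = 2(\tfrac12 - \epsilon)^2 \log e$ is born, with $\eta$ tied to the available gap $1 - 2\epsilon$.

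With these representatives I would define a deterministic SMP protocol: Alice sends the $t_A$ message indices defining $\tilde p_x$ (cost $\approx t_A \cdot a$ bits), Bob sends the $t_B$ indices defining $\tilde q_y$ (cost $\approx t_B \cdot b$ bits), and the referee outputs whether $\langle \tilde p_x, R \tilde q_y\rangle$ exceeds $\tfrac12$. By the triangle inequality this empirical acceptance lies within $\eta_A + \eta_B$ of the true $\mathrm{acc}(x,y)$, so choosing the $\eta$'s to keep this total below $\tfrac12 - \epsilon$ makes the protocol compute $EQ_n$ exactly. Its total length is $\approx t_A a + t_B b$, of order $ab/g_3(\epsilon)$, and must be at least $D^{||}(EQ_n) = 2n$; this yields $ab \gtrsim g_3(\epsilon)\, n$, after which the arithmetic--geometric mean inequality $a + b \geq 2\sqrt{ab}$ delivers $CC_{priv}(\Pi) = a + b \geq 2\sqrt{g_3(\epsilon)\, n}$ up to lower-order terms.

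The main obstacle is the constant bookkeeping required to land precisely on $2\sqrt{g_3(\epsilon)}\sqrt n - g_3(\epsilon) - 6$ rather than on a lossier bound. The naive two-sided accounting above splits the gap $1 - 2\epsilon$ between Alice's and Bob's approximations, effectively replacing $\eta$ by $(\tfrac12 - \epsilon)/2$ and costing a constant factor in the Hoeffding exponent; recovering the sharp $g_3(\epsilon) = 2(\tfrac12 - \epsilon)^2 \log e$ requires an essentially one-sided use of the gap, controlling only those deviations that can flip the referee's threshold decision, together with exact tracking of the ceilings in $t_A, t_B$ and of the additive slack in the union bounds. I would also check that the sampled weights (uniform by construction, which helps) do not erode the margin, and confirm $D^{||}(EQ_n) = 2n$ under the same message-length convention used for $CC_{priv}$, so that the $-6$ and $-g_3(\epsilon)$ corrections emerge exactly.
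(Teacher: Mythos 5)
Your core technique is the same as the paper's: Hoeffding-based sparsification of a player's message distribution, a union bound over the other side's message set to fix a good multiset of sampled messages, an injectivity/collision argument for the resulting deterministic message, and finally optimizing the trade-off between $c_A = \log|M_A|$ and $c_B = \log|M_B|$ via AM--GM. This is exactly the Babai--Kimmel argument in the Gavinsky--Regev--de Wolf form that the paper adapts.

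The one substantive divergence is that you derandomize \emph{both} players and then invoke the deterministic SMP bound, whereas the paper derandomizes only Alice and keeps Bob and the referee randomized. This matters quantitatively, and as written your argument does not reach the stated constant. Splitting the margin as $\eta_A + \eta_B < \tfrac12 - \epsilon$ forces (say) $\eta_A \approx (\tfrac12-\epsilon)/2$, so Alice's injectivity constraint $t_A\, c_A \geq n$ with $t_A \approx c_B/(2\eta_A^2\log e)$ yields only $c_A c_B \gtrsim g_3(\epsilon)n/4$ and hence $c_A+c_B \gtrsim \sqrt{g_3(\epsilon)n}$, a factor of $2$ short of the lemma. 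You correctly flag this and propose a ``one-sided use of the gap,'' but the fix you sketch (controlling only threshold-flipping deviations) is not the mechanism that recovers the constant. The clean repair --- and what the paper does --- is to not derandomize Bob at all: take $\delta \to \tfrac12-\epsilon$ for Alice alone, obtaining a protocol $\Pi'$ with deterministic Alice and error strictly below $\tfrac12$, and observe that if Alice's concatenated message collides on $x\neq x'$ then $\Pr[\Pi'(x,y)=1]=\Pr[\Pi'(x',y)=1]$ for every $y$ (Bob and the referee still being randomized), which at $y=x$ contradicts one probability being above $\tfrac12$ and the other below. This gives $t\,c_A\geq n$ with $t\approx c_B/g_3(\epsilon)$ directly, hence $c_Ac_B\gtrsim g_3(\epsilon)n$ and the bound $2\sqrt{g_3(\epsilon)n}-g_3(\epsilon)-6$ after tracking the ceilings. (Alternatively, your two-sided scheme can be salvaged by sending $\eta_B\to 0$, letting $t_B\to\infty$, and using only Alice's injectivity --- but at that point you are no longer using the deterministic bound $2n$ in any essential way, only the injectivity of one side, which is the paper's argument in disguise.) So: right idea, but the step from your sketch to the claimed constant is exactly the missing piece, and it requires abandoning the symmetric full derandomization.
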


\begin{proof}
We use Hoeffding's inequality: Let $x_1, \cdots, x_t$ be i.i.d random variables in $[0, 1]$, 
and denote the empirical mean $\frac{1}{t}\sum_{i=1}^t x_i = \bar{x}$. 
Then 
\begin{align*}
\Pr [\bar{x} - \mathbb{E} [\bar{x}]  \geq \delta] & \leq \exp_2 (-2 \delta^2 t \cdot \lg e).
\end{align*}

Fix $\delta> 0$, and let $c_A = \log |M_A|$,  $c_B = \log |M_B|$. We take $t = \lceil \frac{c_B + 2}{2 \delta^2 \cdot \log e }  \rceil \leq \frac{c_B + 2}{2 \delta^2 \cdot \log e } + 1$.

Given a protocol $\Pi$ in the private coin SMP model, 
we derandomize the protocol on Alice's side, i.e.~we define a protocol $\Pi^\prime$ also in the private coin SMP model, but in which Alice is deterministic.

Given inputs $(x, y)$ and random 
strings $r_A$, $r_B$, and $r_C$ used for one run of 
protocol~$\Pi$, let $m_A (x, r_A)$ be the message sent by Alice on input $x$ and random string $r_A$, and similarly let $m_B (y, r_B)$ denote the message on Bob's side. 
Given messages $m_A$ and $m_B$, let
$\Pi (m_A , m_B , r_{C})$
 be the output of the referee when run 
on messages $m_A$ and $ m_B$ and random string $r_{C}$, and denote
\begin{align}
Q (m_A, m_B) = \Pr_{r_C} [\Pi (m_A, m_B, r_C) = 1]. 
\end{align}

Now, fix $x$, $m_B$ and $t$ random strings $r_A^1, \cdots, r_A^t$ and let $m_A^i (x) = m_A (x, r_A^i)$. Denote 
\begin{align}
P (x, m_B) = \bE_{r_A} [Q (m_A (x, r_A), m_B) ]
\end{align}
and 
\begin{align}
\bar{Q} (m_A^i (x), m_B) = \frac{1}{t} \sum_{i=1}^t Q (m_A^i (x), m_B).
\end{align}
We get by 
Hoeffding's inequality that 
\begin{align}
\Pr [\bar{Q} (m_A^i (x), m_B) - P (x, m_B) \geq  \delta] \leq \exp_2 (-2 \delta^2 t \cdot \log e ),
\end{align}
in which the probability is over i.i.d. $r_A^i \sim R_A$ being used to generate each  $m_A^i (x)$. 
With the above choice for $t$, we get that for any input $x$ and message $m_B$, 
\begin{align} 
\Pr [\bar{Q} (m_A^i (x), m_B) -  P (x, m_B) \geq \delta] < \exp_2(- c_B -1).
\end{align} 
We similarly get, using a symmetrical argument, that 
\begin{align}
 \Pr[P (x, m_B)  - \bar{Q} (m_A^i (x), m_B)  \geq \delta] < \exp_2(- c_B -1),
\end{align}
and by the union bound we get that
\begin{align}
 \Pr [| \bar{Q} (m_A^i (x), m_B) -  P (x, m_B) | \geq \delta] < \exp_2(- c_B). 
\end{align}
For any fixed $x$, this holds for any message $m_B$ of Bob, and we get by the union bound 
that there exists a choice of the $r_A^i (x)$, and corresponding $m_A^i (x)$, such that for all $m_B \in M_B$,
\begin{align}
\label{eq:ebar}
| \bar{Q} (m_A^i (x), m_B) -  P (x, m_B) | <   \delta.
\end{align}

Consider now the following protocol $\Pi^\prime$ in which Alice is deterministic.
Given $x$, Alice finds $t$ random strings $\{ r_A^i (x) \}_{i=1}^t$ 
and corresponding messages $m_A^i (x) = m_A (x, r_A^i (x))$ such 
that (\ref{eq:ebar}) holds for all possible messages $m_B$ of Bob. 
She sends as her deterministic message $m_A^\prime (x)$ the concatenation $m_A^1 (x) \cdots m_A^t (x)$ of these messages, which is of length 
$\lceil t c_A \rceil$. Given input $y$ and random string $r_B^\prime = r_B$, distributed according to $R_B$, 
Bob sends message $m_B^\prime (y, r_B^\prime) = m_B(y, r_B)$, as in protocol $\Pi$.
Upon receiving these messages, the referee computes $\bar{Q} (m_A^i (x), m_B (y, r_B))$, 
and outputs $1$ with probability $\bar{Q} (m_A^i (x), m_B (y, r_B))$ and $0$ otherwise. 
Denote $\Pi^\prime (x, y)$ the random variable corresponding to the output of 
protocol $\Pi^\prime$ on inputs $(x, y)$, and similarly for protocol $\Pi$. 
Then 
\begin{align*}
\Pr [\Pi^\prime (x, y) = 1] & = \bE_{r_B} [\bar{Q} (m_A^i (x), m_B (y, r_B))] \\
				& \leq \bE_{r_B} [P (x, m_B (y, r_B))] + \delta \\
				&= \Pr [\Pi (x, y) = 1] + \delta, 
\end{align*}
so that if $x \not= y$, then $\Pi^\prime$ makes errors at most $\epsilon + \delta$. A similar argument shows that $\Pi^\prime$ makes error at most $\epsilon + \delta$ also when $x = y$, and hence on all inputs.

Now, fix $\epsilon$ and $\delta$ such that $\epsilon + \delta < 1/2$. Since Alice's message in $\Pi^\prime$
is deterministic, either $\exp_2 (t c_A) \geq  2^n$, or there exist $x \not= x^\prime$ such that Alice 
sends the same message on both $x$ and $x^\prime$. But then, for any $y$,
\begin{align*}
\Pr [\Pi^\prime (x, y) = 1] & = \bE_{r_B } [\bar{Q} (m_A^i (x), m_B (y, r_B))] \\
			& = \bE_{r_B \sim R_B} [\bar{Q} (m_A^i (x^\prime), m_B (y, r_B))] \\
			& = \Pr [\Pi^\prime (x^\prime, y) = 1]. 
\end{align*}
Considering $y = x$, it must hold that $\Pr [\Pi^\prime (x, x) = 1] > 1/2$ 
and $\Pr [\Pi^\prime (x^\prime, x) = 1] < 1/2$, a contradiction. Hence, it must hold that $t c_A \geq n$. 
To get the desired bound, fix $\delta_1 > 0$, take $\delta = 1/2 - \epsilon - \delta_1$, and let  $\delta_1$ tend to
zero. Optimizing $c_A$ and $c_B$ (e.g., using calculus methods on continuous relaxations of $c_A$ and $c_B$, each within an additive factor of $2$) 
in order to minimize the total communication $c_A + c_B$, we obtain  $c_A  \geq \sqrt{g_3 (\epsilon) n } - 2 $ 
and $c_B \geq \sqrt{g_3 (\epsilon) n } -  g_3 (\epsilon) - 4$, so that $CC_{priv} (EQ_n, \epsilon) \geq 2 \sqrt {g_3 (\epsilon) n} - g_3 (\epsilon) -  6$.
\end{proof}

\bibliographystyle{alpha}
\bibliography{eqinfoleak2}

\end{document}